\documentclass[a4paper]{article}
\usepackage{latexsym}
\usepackage{amsmath,amssymb,mathrsfs,pictex} 
\usepackage{amsmath,amsthm,amsfonts,amscd,eucal}
\usepackage{graphicx}
\usepackage{epsfig}
\usepackage{epstopdf} 

\usepackage[affil-it]{authblk}

\numberwithin{equation}{section}

\hfuzz12pt \vfuzz12pt


\def\cb{{\mathcal B}}

\def\cd{{\mathcal D}}

\def\cf{{\mathcal F}}

\def\ch{{\mathcal H}}

\def\cs{{\mathcal S}}


\def\bc{{\mathbb C}}

\def\bn{{\mathbb N}}

\def\br{{\mathbb R}}
\def\bs{{\mathbb S}}

\def\bz{{\mathbb Z}}


\def\ga{{\mathfrak A}}

 \def\gph{{\mathfrak h}}

\def\a{\alpha}
\def\b{\beta}
\def\g{\gamma}        \def\G{\Gamma}
\def\d{\delta}        \def\D{\Delta}
\def\eps{\varepsilon}


\def\l{\lambda}       \def\La{\Lambda}
\def\m{\mu}
\def\n{\nu}


\def\r{\rho}
\def\s{\sigma}

\def\f{\varphi} 

\def\om{\omega}        \def\Om{\Omega}


\newtheorem{Thm}{Theorem}[section]

\newtheorem{Prop}[Thm]{Proposition}
\newtheorem{Lemma}[Thm]{Lemma}

\theoremstyle{definition}
\newtheorem{Dfn}[Thm]{Definition}

\theoremstyle{remark}

%

\def\di{\mathop{\mathop{\rm d}}\!}

\def\im{\mathop{\rm Im}}

\newcommand{\svv}{\text{span}\,}

\newcommand{\carf}{\textrm{CAR}}
\newcommand{\ccr}{\textrm{CCR}}

\newcommand{\tr}{\textrm{Tr}}

\def\idd{{\bf 1}\!\!{\rm I}}



\def\supp{\mathop{\rm supp}}

\newcommand{\nn}{\nonumber}


\DeclareMathOperator{\vol}{vol}

\begin{document}

\title{Bose-Einstein condensation and condensation of $q$-particles in equilibrium and non equilibrium thermodynamics: a new approach}
\author{Luigi Accardi%
\thanks{Electronic address: \texttt{accardi@volterra.uniroma2.it}}}
\affil{Centro Vito Volterra, Universit\`{a} di Roma Tor Vergata, 
Via Columbia 2, Roma 00133, Italy}
\author{Francesco Fidaleo%
\thanks{Electronic address: \texttt{fidaleo@mat.uniroma2.it}; corresponding author}}
\affil{Dipartimento di Matematica, Universit\`{a} di Roma Tor Vergata, 
Via della Ricerca Scientifica 1, Roma 00133, Italy}



\maketitle

\begin{abstract}

In the setting of the principle of local equilibrium which asserts that the temperature is a function of the energy levels of the system, we exhibit plenty of steady states describing the condensation of free Bosons which are not in thermal equilibrium. The surprising facts are that the condensation can occur both in dimension less than 3 in configuration space, and even in excited energy levels. The investigation relative to non equilibrium suggests a new approach to the condensation, which allows an unified analysis involving also the condensation of $q$-particles, $-1\leq q\leq 1$, where $q=\pm1$ corresponds to the Bose/Fermi alternative. For such $q$-particles, the condensation can occur only if $0<q\leq1$, the case 1 corresponding
to the standard Bose-Einstein condensation. In this more general approach, completely new and unexpected states exhibiting condensation phenomena naturally occur also in the usual situation of equilibrium thermodynamics. 
The new approach proposed in the present paper for the 
situation of $2^\text{nd}$ quantisation of free particles, is naturally based on the theory of the Distributions, which might hopefully be extended to more general cases.\\
\vskip.1cm
\noindent
Keywords: Bose-Einstein Condensation, equilibrium, non equilibrium steady states, $C^*$-algebras, Kubo-Martin-Schwinger boundary condition, $q$-particles, distributions, kernels\\
PACS numbers: 05.70.Ln, 05.30.Jp, 05.30.Pr, 02.30.Sa, 64.60.Bd
\end{abstract}

\section{introduction}

The investigation of the Bose-Einstein Condensation (BEC for short) has a very long history after discovering a new particle statistics by Satyendra Nath Bose (cf. \cite{B})
and Albert Einstein (see e.g. \cite{C}) at the beginning of 20th century. It concerns the fact that a macroscopic amount of 
elementary particles having integer spin (called Bosons in honour of N. Bose) can occupy the ground state after the thermodynamic limit. The simplest and most important ideal model is that consisting of free massive Bosons. We also mention the Gross-Pitaevskii equation (cf. \cite{G, P}) describing a weakly interacting Bose-Einstein condensate at low temperature. Many phenomena involving quasi-particles like phonons or magnons can be described by the BEC. Recently, in \cite{KSVW} the condensation of massless particles like photons has been pointed out as well. Also the phenomenon of the superfluidity of the helium isotope $He_4$ seems to be tightly but not directly connected with the BEC. In fact, on one hand the superfluidity is mentioned as one of the most important examples confirming the evidence of the condensation effects. On the other hand, it is unclear if this phenomenon is directly connected with the condensation of (quasi) particles, and was explained by Lev Landau (cf. \cite{L}) with a fluid composed of two indivisible components, one in the superfluid phase and the other one in the normal phase. At zero temperature, the $He_4$ is entirely composed by the superfluid component. The thermal agitation causes the excitation of the part of the dispersion spectrum corresponding to phonons, and mainly to rotons. In such a way, as the temperature increases, 
the portion of the superfluid component becomes lover and lover until to disappear at the critical temperature, known as the $\l$-point.

The behaviour at low temperatures of particles obeying to Fermi-Dirac statistics is completely different. Due to Pauli exclusion principle,
it is well known that Fermions (i.e. quantum particles of half-integer spin) does not lead to any condensation by the Pauli exclusion principle. Nevertheless, the other isotope helium $He_3$ still exhibits superfluidity at a temperature very close to $0^o$ Kelvin, even if these are Fermi particles. The superfluidity of $He_3$ can be justified with the fact that at very low temperature, $He_3$-particles form the so called sea of pair-particles which can be considered as Bosons. Namely, also $He_3$ can exhibit superfluidity. According to the BCS theory (cf. \cite{BC}), it is precisely the same phenomenon occurring in superconductors where pairs of electrons forms the so called Bardeen-Cooper pairs. The reader is also referred to \cite{F2, F3, FGI}, where it is shown the condensation of Bardeen-Cooper pairs in arrays of Josephson junctions. In this sense, the phenomenon of superconductivity is also connected with the BEC. Finally, in Sections 11.7 and 11.10 of the 
monograph \cite{ALV}, the BCS pairing appears as a general phenomenon of {\it Bosonisation} in the setting of stochastic limit.
For a exhaustive treatment of the various condensation phenomena and perspectives, we mention the sample very far to be complete, of books \cite{Le, Lo, U}, and \cite{BR} with the literature cited therein, for a rigorous mathematical approach to BEC.

Typically BEC is considered in states which are in equilibrium at a 
certain temperature and with respect to the natural dynamics generated by a given Hamiltonian.
The definition of equilibrium states in terms of Kubo-Martin-Schwinger (KMS for short) 
condition (cf. \cite{K, MS}) w.r.t. to a given dynamics has the advantage 
compared with the usual Gibbs prescription given in terms of the 
Hamiltonian implementing the dynamics, of being valid also after the 
infinite volume limit, see e.g. \cite{HHW}. 
It is known that an equilibrium state for a given dynamics is automatically 
stationary for this dynamics. 

Non Equilibrium Steady States (NESS for short) play also an important role in
non equilibrium thermodynamics. 
A natural class of NESS naturally emerges from the stochastic 
limit of quantum theory, see e.g. \cite{AI, AIK, AKP, ALV} 
and references therein. 
States in this class are called {\it local equilibrium} (or {\it local KMS}
states) where the term {\it local} refers to the fact that
the temperature is a function of the Hamiltonian of the system
so that any energy level can be thought to be in equilibrium at a 
temperature  depending on the level itself. Moreover, these NESS are 
characterised in terms of a {\it Local Equilibrium Principle}, that 
can be considered as a generalisation of the usual KMS equilibrium one.
The Local Equilibrium Principle was formulated in \cite{AI} for systems with pure-point spectrum dynamics. In \cite{AFQ} this principle was 
formulated for systems whose dynamics is implemented by a strongly 
continuous $1$-parameter unitary group acting on a given Hilbert space.
Several equivalent (or almost equivalent) formulations of it were 
also discussed for systems with pure-point spectrum dynamics. 
The present paper extends the above results in several directions as follows.
\vskip0.1cm
\noindent
{\bf (i)} Exploiting the fact that the quasi-free dynamics (i.e. one-parameter group of Bogoliubov automorphisms)
leave the Weyl $C^*$-algebra invariant, we give a 
purely $C^*$-algebraic formulation of the Local Equilibrium Principle for 
quasi-free (gaussian) NESS with respect to a given quasi-free dynamics, and 
a given {\it inverse temperature function} $\beta$.
\vskip0.1cm
\noindent
{\bf (ii)} We characterise the structure of these states and we find that, 
for the affine function $\tilde\b(h)=\beta(h)h=\frac{h-\m}{T}$ (with the Boltzmann constant $k=1$), hence including the chemical potential $\m$, they are reduced to the usual quasi-free 
equilibrium states.
\vskip0.1cm
\noindent
{\bf (iii)} We extend the study the BEC phenomenon to the setting of local equilibrium 
quasi-free states that, because of (ii), includes the equilibrium ones as particular cases.
\vskip0.1cm
\noindent
{\bf (iv)} In the setting of local equilibrium, we exhibit states exhibiting BEC, not only on the ground 
state, but also (or only) on some excited levels, depending on the function 
$\beta$, entering in the definition of the Local Equilibrium Principle.
\vskip0.1cm
\noindent
{\bf (v)}  We construct examples of local NESS for which BEC occurs 
also for spatial dimensions $d$ different from the usual ones ($d\geq3$ 
for free massive Bosons, and $d\geq2$ for massless particles).
\vskip0.1cm
\noindent
{\bf (vi)}  We construct local quasi-free NESS, exhibiting BEC in excited levels, 
for which the rotation symmetry can be spontaneously broken. The rotationally invariant ones are obtained by averaging on the spheres in momentum space $\{{\bf p}\in\br^d\mid\b(h({\bf p}))={\bf 0}\}$, whenever $h({\bf p})$ is the one-particle Hamiltonian of a free massive Boson.
\vskip0.1cm
\noindent
{\bf (vii)}  We extend our analysis to include the case of $q$-Deformed 
Commutation Relations with $q\in[-1,1]$, and we prove that BEC can occur 
also for the $q$-relations provided that $q\in(0,1]$. 
The case $q\in(0,1)$ opens some mathematically (and maybe also physically)
interesting possibilities which at the moment seem to be unexplored.
\vskip0.1cm
\noindent
{\bf (viii)}  Even in the usual case of equilibrium thermodynamics, we exhibit completely new states describing BEC, which are mathematically meaningful, and could have promising physical applications.
\vskip0.1cm
\noindent

Many of the results of the present paper concerning the $2^\text{nd}$ quantisation of free particles, are obtained by using a new approach based on the theory of the Distributions, which hopefully might be extended to more general situations.

\section{the local equilibrium condition}
\label{sucf}

As a first step in posing the problem, we consider a physical system with a finite degrees of freedom, whose observables are typically described by the $C^*$-algebra of all the bounded operators $\cb(\ch)$ acting on a separable Hilbert space $\ch$. The time-evolution is given in Heisenberg picture by
$$
a\in\cb(\ch)\mapsto \a_t(a)=e^{\imath Ht}ae^{-\imath Ht}\in\cb(\ch)\,,\quad t\in\br\,,
$$
where the Hamiltonian $H\geq0$ of the system is a self-adjoint operator acting on $\ch$, which is generally unbounded. To simplify the matter, we suppose that 
\begin{equation}
\label{hami}
H=\sum_{n=0}^{\infty}\eps_n|\psi_n\rangle\langle\psi_n|
\end{equation}
is positive with compact resolvent, with eigenvalues $\eps_n\in\s(H)$ and the corresponding eigenvectors $\psi_n\in\ch$, repeated according the multiplicity and rearranged in increasing order if needed. Fix any positive function
(i.e. the {\it local inverse temperature}) $\b:\br_+\rightarrow \br_+$ such that $e^{-\b(H)H}$ is a trace-class operator. According to the {\it Local Equilibrium Principle} (see e.g. \cite{AFQ} for details and the literature on the topic),
define the state
\begin{equation}
\label{lockms}
\om_\b(a):=\tr\big(e^{-\b(H)H}a\big)\,,\quad\a\in\cb(\ch)\,.
\end{equation}
Denote $\cf(\ch)$ the sub algebra consisting of all the finite-rank operators acting on $\ch$, and $v_{ij}:=|\psi_i\rangle\langle\psi_j|$, the $\psi_i$ being the eigenfunctions of $H$ according to \eqref{hami}. It is immediate to show that
$$
z\in\bc\mapsto\om_\b(a\a_z(b))\,,\quad a\in\cb(\ch)\,,\,\,b\in\cf(\ch)\,,
$$
is well defined and entire. The Local Equilibrium Principle (LEQ for short) simply means
\begin{equation*}
\om_\b(a\a_{t+i\b(\eps_i)}(v_{ij}))
=e^{[\b(\eps_i)-\b(\eps_j)]\eps_j}\om_\b(\a_{t}(v_{ij})a)\,,\quad a\in\cb(\ch)\,.
\end{equation*}
The LEP, which reduces to the usual KMS boundary condition (cf. \cite{BR, HHW})
when $\b$ is the constant function, is not immediately generalizable to arbitrary dynamical systems $(\ga,\a_t)$ consisting of a 
$C^*$-algebra, and an action $\a_t$ of one-parameter group of possibly outer $*$-automorphisms of $\ga$. A natural way to get such a possible generalisation at least for inner time-evolution, is to look at the modified evolution
\begin{equation}
\label{lockms2}
a\in\cb(\ch)\mapsto\a^{(\b)}_t(a):=e^{\imath\b(H)Ht}ae^{-\imath\b(H)Ht}\in\cb(\ch)\,.
\end{equation}
In the discrete spectrum Hamiltonian case discussed below, 
$$
z\in\bc\mapsto\om_\b(a\a^{(\b)}_z(b))\,,\quad a\in\cb(\ch)\,,b\in\cf(\ch)\,,
$$
is again well defined and entire, 
and the state $\om_\b$ satisfies the KMS boundary condition w.r.t. this modified evolution at inverse temperature $1/T=1$.

We end by reporting the property analogous to the well-known one concerning the KMS boundary condition in momentum space.
Consider any continuous compactly supported function $f$ on $\br$, together with its Fourier (inverse) transform $\check f$. Then for each $a\in\cb(\ch)$
$$
\m_a(f):=\int_{-\infty}^{+\infty}\check f(t)\om_\b(a^*\a^{(\b)}_t(a))\,,\quad
\n_a(f):=\int_{-\infty}^{+\infty}\check f(t)\om_\b(\a^{(\b)}_t(a)a^*)
$$
define Radon measures on $\br$. In this simple situation, 
$\m$ and $\n$ are linear combinations of Dirac measures supported in a subset 
$$
K_a\subset\{\b(\eps_n)\eps_n-\b(\eps_m)\eps_m\mid \eps_n,\eps_m\in\s(H)\}\,,
$$ 
depending of $a$. In addition, such measures 
are equivalent with 
Radon-Nikodym derivative given by
$$
\frac{\di \m_a}{\di \n_a}(k)=e^{-k}
$$
where the $k\in K_a$. The last simple calculation has the following meaning. The set the {\it Bohr Frequencies} of a physical system as before consists of all the possibly energies of the quanta that it can emit or absorb while interacts with the environment, see \cite{AK} for a simple explanation. It is given by $\{\eps_n-\eps_m\mid \eps_n,\eps_m\in\s(H)\}$. In the more general case of LEQ, the natural Bohr Frequencies (or the so-called {\it Arveson Spectrum} in the mathematical language) for the possible transition of the system are precisely those 
$\{\b(\eps_n)\eps_n-\b(\eps_m)\eps_m\mid \eps_n,\eps_m\in\s(H)\}$ of the modified Hamiltonian $H_\b:=\b(H)H$. We end the present section with the following facts. First, we observe the similar dependence of the temperature directly on the Arveson Spectrum of the involved automorphism arising from the spectrally passive states in \cite{dC}. Second, we point out again quasi-free states with non constant temperature described in \cite{BOR} for non equilibrium relativistic thermodynamics.

\section{local equilibrium principle for free Bosons}

Let $\gph$ the {\it one-particle Hilbert space}. The Canonical Commutation Relations  (CCR for short) algebra describes Bose particles and is generated by 
\begin{equation}
\label{cccrr}
a(f)a^{\dagger}(g)-a^{\dagger}(g)a(f)=\langle f|g\rangle\idd\,\quad f,g\in\gph_0\,.
\end{equation}
for a dense subspace $\gph_0\subset \gph$. It is well-known that the CCR cannot be represented as bounded operators acting on some Hilbert space even if $\gph=\bc$.
The standard procedure is to pass to the CCR represented in the Weyl form. It is the universal $C^*$-algebra $\ccr(\gph_0)$ generated by unitary operators satisfying the relations
\begin{align}
\label{cccrr1}
&W(h)^*=W(-h)\,,\quad W(0)=\idd\,,\\
&W(f)W(g)=e^{i\frac{\im(f,g)}{2}}W(f+g)\,,\quad f,g\in\gph_0\nn\,.
\end{align}
Thus, with $\ccr(\gph_0)$ we denote the universal $C^*$-algebra generated by the Weyl operators $\{W(f)\mid f\in\gph_0\}$ satisfying the relations \eqref{cccrr1}.

Let $h>0$ be the {\it one-particle Hamiltonian}. Suppose that $e^{\imath ht}\gph_0=\gph_0$. 
Then the time evolution is generated by a one-parameter group of Bogoliubov automorphisms which acts on the Weyl generators 
as
\begin{equation}
\label{tievo}
\a_t(W(f)):=e^{\imath t\di\G(h)}W(u)e^{-\imath t\di\G(h)}
=W(e^{\imath th}f)\,,\quad f\in\gph_0\,, t\in\br\,.
\end{equation}
Here, $H=\di\G(h)$ is the second quantised Hamiltonian, and we have identified $W(f)$ with its image $\pi_F(W(f))$ in the Fock representation because $\ccr(\gph_0)$ is a simple $C^*$-algebra, see e.g. \cite{BR, LRT} and the references cited therein.

For systems for which $\gph$ is finite-dimensional, the second quantised Hamiltonian $H:=\di\G(h)$
has compact resolvent, which typically occurs when one considers a free gas confined in a box of finite volume. In this framework, $e^{-\b(H)H}$ is automatically trace-class, see e.g. Proposition 5.2.27 of  \cite{BR}. Thus, the unique state $\om_\b$ satisfying the LEP w.r.t. the time evolution
\eqref{tievo} is the quasi-free state uniquely determined by the two-point function
$$
\om_\b(a^{\dagger}(g)a(f))=\big\langle f\big|(e^{\b(h)h}-1\big)^{-1}\big|g\big\rangle\,,\quad f,g\in\gph\,.
$$
By using the CCR \eqref{cccrr}, we have
\begin{align}
\label{lepccr}
&\om_\b(a(g)a^{\dagger}(e^{-\b(h)h}f))=\om_\b(a^{\dagger}(f)a(g))\,,\\
&\om_\b(a^{\dagger}(g)a(e^{\b(h)h}f))=\om_\b(a(f)a^{\dagger}(g))\,.\nn
\end{align}
The boundary conditions \eqref{lepccr} gives the natural generalisation for LEP in the cases of quasi-free states of CCR algebras, providing several nontrivial examples for which the dynamics is neither inner, nor has pure-point spectrum in general.

We start with a separable Hilbert space $\gph$, together with a self-adjoint positive Hamiltonian (in general unbounded) $h>0$, which are nothing but the one-particle Hilbert space with the relative one-particle Hamiltonian. Let $\eps\in\br\mapsto e(\eps)\in\cb(\ch)$ be the right-continuous resolution of the identity associated to $h$ (cf. Section VIII.3 of \cite{RS}) according to
\begin{equation}
\label{residy}
h=\int \eps\di e_h(\eps)\,.
\end{equation}
Fix a Borel function $\b:\br_+\to\br_+$ which is positive almost everywhere w.r.t. the measure class determined by the projection-valued measure $\di e(\eps)$. The key-point to manage open thermodynamical systems where exchange of matter is also allowed, is to introduce the {\it chemical potential}, see e.g. \cite{LL}. 
In order to achieve the chemical potential for local equilibrium, we follow the fact pointed out in Section \ref{sucf} which asserts that a state satisfying the LEP is indeed an equilibrium one for a modified Hamiltonian at inverse temperature $1/T=\b=1$. For the {\it activity} $z$ (cf. pag. 47 of \cite{BR}), it leads to $z(1/T,\m)=z(1,\m)=e^{\m}$, where $\m$ is the chemical potential. This justifies the definition for $\m\in\br$ in
\begin{equation}
\label{chpoz}
\g_\m(x):=e^{\b(x)x-\m}\,.
\end{equation}
For the $q$-particles with $q\in(0,1]$ we have a natural restriction to the value of the allowed chemical potentials $\m$ coming from the positivity condition of the occupation number, see below. For the Bose case $q=1$, such a condition simply leads to $\m\leq0$. 

A {\it quasi-free} state is a mean-zero gaussian state $\om$, a-priori defined on the whole 
$\ccr(\gph)$. Is uniquely determined by its two-point function as it is explained in Section 5.2.3 of \cite{BR}. In fact, we can start with a quadratic form $Q$ on $\gph$ such that
$$
\om_Q(a^\dagger(f)a(f))=Q(f)\,,\quad f\in\gph\,.
$$
This means nothing but that in a more rigorous form,
$$
\om_Q(W(f))=\exp\{-(\|f\|^2/4+Q(f)/2)\}\,,\quad f\in\gph\,,
$$
which is meaningful also when $Q(f)=+\infty$. A standard procedure to manage quasi-free states (see Theorem 5.2.31 of \cite{BR}, for the situation describing BEC) is to consider the domain $D_Q\subset\gph$ made of all the elements on which $Q$ is finite, and consider the sesquilinear form $F_Q$ defined by polarisation on all of $D_Q$ such that
$$
\om_Q(a^\dagger(f)a(g))=F(g,f)\,,\quad f,g\in D_Q\,.
$$
Thus,
we fix a value of $\m\leq0$, and a dense subspace $\gph_0\subset\gph$ such that
\begin{itemize}
\item[(i)] $e^{\imath ht}\gph_0=\gph_0$, $t\in\br$.
\end{itemize}
Among the quasi-free states $\om_Q$ on $\ccr(\gph_0)$, we consider those for which 
\begin{itemize}
\item[(ii)] $\gph_0\subset D_Q$, and $\g_\m(h)\gph_0\subset D_Q$, with $\g_\m$ in \eqref{chpoz}.
\end{itemize}
The natural generalisation of the LEP to quasi-free states of Bosonic systems is one of the (equivalent) conditions given in \eqref{lepccr}.
\begin{Dfn}
\label{lepcafi}
Suppose that the dense subspace $\gph_0\subset\gph$ and the quasi-free state $\om\in\cs(\ccr(\gph_0))$ fulfill (i)-(ii) above. Then $\om$ satisfies the Local Equilibrium Principle at local inverse temperature $\b$ and chemical potential $\m\leq0$ if 
\begin{equation}
\label{lepccr2}
\om(a^{\dagger}(f)a(\g_\m(h)g))=\om(a(g)a^{\dagger}(f))\,,\quad f,g\in\gph_0\,.
\end{equation}
\end{Dfn}
It is immediate to prove (cf. Section 5.3.1 of \cite{BR}) that a quasi-free state satisfying the LEP is invariant w.r.t. the modified evolution \eqref{lockms2}, provided 
$e^{\imath \b(h)ht}\gph_0=\gph_0$, $t\in\br$. In general, it is not clear if it is invariant also w.r.t. the natural time-evolution of the system generated by the one-parameter Bogoliubov automorphisms $t\mapsto e^{\imath ht}$. However, it is possible to see by direct inspection for all the cases under consideration in the present paper, that those are indeed invariant w.r.t. the dynamics generated by the one-particle Hamiltonian.
Another nontrivial question arises in introducing the chemical potential $\m$ in order to investigate the thermodynamics of open systems. Our ansatz is to take for the occupation number $n_\eps$ at energy $\eps$ and chemical potential $\m$ (independent on the energy levels),
$$
n_\eps=\frac1{e^{\b(\eps)\eps-\m}-1}\,.
$$
Another reasonable choice would be
$$
n_\eps=\frac1{e^{\b(\eps-\m)(\eps-\m)}-1}\,.
$$
Both choices leads to equivalent results in the usual equilibrium situation (i.e. if $\b=\text{const.}$), but these might lead to different situations in non equilibrium ones. The most general situation in introducing the chemical potential would be that it also depends on the energy levels. Another quite natural choice will be briefly discussed in Section \ref{outlook}. 

As we are going to see, for the Bose case and also for $q$-particles with $q>0$, we can exhibit a multitude of quasi-free states satisfying LEP and exhibiting BEC. Among such states, we will find many ones which exhibit unexpected properties, even in the equilibrium situation. Due to Pauli exclusion principle, the Fermi case and also $q$-particles with $q\leq0$, leads to only one stationary quasi-free state satisfying LEP. In fact, the same analysis can be done {\it mutatis-mutandis} for Fermi particle for which the condensation never occur. Consider the Canonical Anti-commutation Relation algebra which is the $C^*$-algebra $\carf(\gph)$ generated by annihilators $c(f)$ satisfying
 \begin{equation}
 \label{ccarr}
 c(f)c^{\dagger}(g)+c^{\dagger}(g)c(f)=\langle f|g\rangle\idd\,\quad f,g\in\gph\,.
 \end{equation}
For the gas of free Fermions, fix a density $\r>0$ and consider the (unique) solution of the equation
$$
\int_{\br_+}\frac{\di N(\eps)}{e^{\b(\eps)\eps-\m}+1}=\r
$$
in the unknown $\m\in\br$. Here, the cumulative function $N$ is the so-called {\it integrated density of the states} and describes the density of eigenvalues of the Hamiltonian in the infinite volume limit. At the knowledges of the authors, the integrated density of the states might not exists for a chosen finite volume exhaustion of the underlying physical space, or it may depend on the chosen exhaustion even in the amenable cases. The reader is referred to \cite{F1} for the rigorous definition of the integrated density of the states and some relevant properties for some relevant amenable and non amenable cases.
Then the quasi-free state $\om_{\b,\m}$ determined by the two-point function
$$
\om_{\b,\m}(c^\dagger(f)c(g))=\big\langle g\big|(e^{\b(h)h-\m}+1)^{-1}\big|f\big\rangle\,,\quad f,g\in\gph
$$
is the unique quasi-free state fulfilling the LEP for the inverse local temperature function $\b$ and the chemical potential 
$\m$.

We end the present section by pointing out that our framework of local equilibrium can be easily generalised to self-adjoint, not necessarily semi-bounded Hamiltonians $h$, provided that 
for the local temperature function $\b$, $\b(x)x$ is semi-bounded, almost surely w.r.t. the measure class determined by the resolution of the identity $\eps\mapsto e_h(\eps)$ of $h$ in \eqref{residy}.

\section{local equilibrium principle and the Bose-Einstein condensation}
\label{00000}

In order to exhibit states describing the BEC even in the more general context of local equilibrium, we specialise the matter to the simplest model describing non relativistic free Bosons living on $\br^d$. To simplify, we put $m=1/2$ for their mass. Analogous considerations can be done for Bosons on lattices $\bz^d$. Indeed, fix the functions in the class 
$$
\check{\cd}(\br^d)\subset\cs(\br^d)\subset L^2(\br^d,\di^d{\bf x})
$$ 
made of the Fourier anti-transform of all the infinitely often differentiable functions with compact support in momentum space.
The corresponding one-particle Hamiltonian will be 
$$
h=-\sum_{j=1}^d\frac{\partial^2\,\,\,\,}{\partial x^2_j}\equiv-\D
$$
given by the opposite of the Laplace operator on $\br^d$. It is immediate to see that (i) before Definition \ref{lepcafi} is satisfied. The one-particle Hamiltonian $-\D$ is nothing but the multiplication for the function
$$
k^2:=\sum_{j=1}^dk_j^2
$$
in the momentum space after Fourier Transform, where as usual, ${\bf k}=(k_1,\dots,k_d)$. Thus, we can consider any non negative Borel function $\b:(0,+\infty)\rightarrow (0,+\infty)$ with $\b>0$ almost everywhere w.r.t. the Lebesgue measure on $\br$. 

The first interesting phenomenon is that, in the setting of local equilibrium, particles can condensate also on excited levels of the energies. The main result of the present section concerning the condensation regime for which $\m=0$ (cf. Section \ref{new}) is summarised in the following
\begin{Thm}
\label{tcbb}
Let $\frac1{e^{\b(p^2)p^2}-1}\in L^1_{\rm loc}(\br^d)$, and the function $x\b(x)\in L^\infty_{\rm loc}(\br_+)$. Suppose that for some $x_0\in[0,+\infty)$, $\lim_{x\to x_0}x\b(x)=0$. For each $D\geq0$, consider the point-mass measure $\n_{D,{\bf k}}:=D\d_{\bf k}$. Then
the quasi-free state $\om_{\b,\n_{D,{\bf k}}}\in\cs(\ccr(\check{\cd}(\br^d)))$ with two-point function
\begin{equation}
\label{1ab1}
\om_{\b,\n_{D,{\bf k}}}(a^\dagger(\check f)a(\check g)):=\int_{\br^d}\frac{f({\bf p})\overline{g({\bf p})}}{e^{\b(p^2)p^2}-1}\di^d{\bf p}
+D f({\bf k})\overline{g({\bf k})}\,,\quad f,g\in\cd(\br^d)\,,
\end{equation}
satisfies the LEP w.r.t. the local inverse temperature function $\b$ and chemical potential $\m=0$, provided $k^2=x_0$.
\end{Thm}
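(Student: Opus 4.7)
The plan is to work throughout in momentum space, where Fourier transform maps $\gph_0=\check\cd(\br^d)$ isometrically onto $\cd(\br^d)$ and diagonalises $h=-\D$ as multiplication by $p^2$. The verification then consists of three ingredients: well-posedness of the formula \eqref{1ab1} as the two-point function of a quasi-free state on $\ccr(\gph_0)$, the structural conditions (i)--(ii) of Definition \ref{lepcafi}, and a direct check of the boundary condition \eqref{lepccr2} with $\m=0$, where $\g_0(h)=e^{\b(h)h}$.

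For well-posedness, the sesquilinear form in \eqref{1ab1} is pointwise non-negative on the diagonal, and for $f\in\cd(\br^d)$ its integrand is bounded by $\|f\|_\infty^2\cdot(e^{\b(p^2)p^2}-1)^{-1}$ on the compact set $\supp f$, hence integrable by the local $L^1$ hypothesis. Thus $F_Q$ defines a positive semi-definite form on $\gph_0$ and the exponential recipe of Section 5.2.3 of \cite{BR} yields a unique quasi-free state $\om_{\b,\n_{D,{\bf k}}}\in\cs(\ccr(\gph_0))$. Condition (i) is immediate since $e^{\imath tp^2}$ is a smooth unimodular multiplier preserving $\cd(\br^d)$. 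For (ii) the function $\g_0(h)\check f$ has momentum symbol $e^{\b(p^2)p^2}f({\bf p})$, which inherits the compact support of $f$ and is bounded there thanks to $x\b(x)\in L^\infty_{\rm loc}$; the integral part of $F_Q(\g_0(h)\check f,\g_0(h)\check f)$ is therefore finite by the same local integrability argument, while the point-mass contribution equals $De^{2\b(k^2)k^2}|f({\bf k})|^2$, which is finite because $k^2=x_0$ together with $\lim_{x\to x_0}x\b(x)=0$ allows us to set $\b(x_0)x_0=0$ (a null-set modification of $\b$ which does not affect the integral term).

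For the LEP identity, set $f=\check u$, $g=\check v$ with $u,v\in\cd(\br^d)$ and substitute $v\mapsto e^{\b(p^2)p^2}v$ into \eqref{1ab1}. Pulling $e^{\b(p^2)p^2}$ (which is real) out of the complex conjugate and using $e^{\b(k^2)k^2}=1$ in the condensate term gives
\[
\om(a^\dagger(\check u)a(\g_0(h)\check v))=\int_{\br^d}\frac{e^{\b(p^2)p^2}u({\bf p})\overline{v({\bf p})}}{e^{\b(p^2)p^2}-1}\,\di^d{\bf p}+D\,u({\bf k})\overline{v({\bf k})}.
\]
On the other hand, the CCR \eqref{cccrr} together with Parseval yields
\[
\om(a(\check v)a^\dagger(\check u))=\om(a^\dagger(\check u)a(\check v))+\langle\check v|\check u\rangle,
\]
and an elementary calculation using $\frac{1}{e^{\b(p^2)p^2}-1}+1=\frac{e^{\b(p^2)p^2}}{e^{\b(p^2)p^2}-1}$ reproduces exactly the right-hand side of the previous display. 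This establishes \eqref{lepccr2}.

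The algebraic manipulation is routine; the subtle point is the singular character of the condensate term. The integral part of $F_Q$ is stable under null-set modifications of $\b$, but the point-mass at ${\bf k}$ demands the pointwise identity $\b(x_0)x_0=0$ in order that the factor $e^{\b(k^2)k^2}$ disappear from both sides of \eqref{lepccr2}. The limit hypothesis $\lim_{x\to x_0}x\b(x)=0$ is precisely what permits this reduction, and together with the freedom to choose $x_0>0$ is what opens the door to condensation on excited energy levels.
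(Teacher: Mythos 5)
Your proposal is correct and follows essentially the same route as the paper's proof: establish well-posedness from the local $L^1$ hypothesis, use $x\b(x)\in L^\infty_{\rm loc}$ to put $e^{\b(h)h}\check f$ in the form domain, resolve the value of $e^{\b(k^2)k^2}$ at the condensation point via the limit hypothesis (so the condensate term is unchanged), and verify \eqref{lepccr2} through the CCR identity $1+(e^{x}-1)^{-1}=e^{x}(e^{x}-1)^{-1}$. No further comment is needed.
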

\begin{proof}
Thanks to $\frac1{e^{\b(p^2)p^2}-1}\in L^1_{\rm loc}(\br^d)$, \eqref{1ab1} is well defined for each $f,g\in\cd(\br^d)$. In addition, as $x\b(x)\in L^\infty_{\rm loc}(\br_+)$ then 
$$
\int_{\br^d}\frac{e^{2\b(p^2)p^2}|f({\bf p})|^2}{e^{\b(p^2)p^2}-1}\di^d{\bf p}\leq e^{2\|x\b(x)\lceil_{[0,r^2]}\|_\infty}\|f\|^2_\infty
\int_{p\leq r}\frac{\di^d{\bf p}}{e^{\b(p^2)p^2}-1}\di^d{\bf p}<+\infty\,,
$$
where $\supp(f)\subset D_r$, $D_r$ being the disk of radius $r$ centred in the origin. Finally, if $k^2=x_0$, the function $\widehat{e^{\b(h)h}\check f}$ is uniquely defined in ${\bf k}$ as
$$
\big(\widehat{e^{\b(h)h}\check f}\big)({\bf k})=\big(\lim_{{\bf p}\to{\bf k}}e^{\b(p^2)p^2}\big)f({\bf k})=
\big(\lim_{x\to x_0}e^{\b(x)x}\big)f({\bf k})=f({\bf k})\,,
$$ 
because the function $e^{\b(p^2)p^2}$ coincides a.e. with a measurable function which is continuous in $k^2=x_0$. Collecting together, we have first that $e^{\b(h)h}\check f$ is in the domain of the form \eqref{1ab1}. In addition, by using the commutation relation \eqref{cccrr}, we compute
\begin{align*}
&\om_{\b,\n_{D,{\bf k}}}(a(\check g)a^\dagger(\check f))
=\int_{\br^d}\bigg(1+\frac1{e^{\b(p^2)p^2}-1}\bigg)
f({\bf p})\overline{g({\bf p})}\di^d{\bf p}+Df({\bf k})\overline{g({\bf k})}\\
=&\int_{\br^d}\frac{e^{\b(p^2)p^2}}{e^{\b(p^2)p^2}-1}
f({\bf p})\overline{g({\bf p})}\di^d{\bf p}+De^{\b(k^2)k^2}f({\bf k})\overline{g({\bf k})}
=\om_{\b,\n_{D,{\bf k}}}\big(a^\dagger(\check f)a\big(e^{\b(h)h}\check g\big)\big)\,,
\end{align*}
that is \eqref{lepccr2} is satisfied for $\m=0$.
\end{proof}
The two boundedness conditions in Theorem \ref{tcbb} have a different meaning. The second one $x\b(x)\in L^\infty_{\rm loc}(\br_+)$, automatically verified in the usual equilibrium setting when $\b=\text{const.}$, can be relaxed case-by-case when one considers special examples of temperature function $\b(\eps)$. The first one 
$\frac1{e^{\b(p^2)p^2}-1}\in L^1_{\rm loc}(\br^d)$ has an important physical meaning. First of all, we note that $\frac1{e^{\b p^2}-1}\in L^1_{\rm loc}(\br^d)$ is equivalent to
$\frac1{e^{\b p^2}-1}\in L^1(\br^d)$ in the equilibrium situation. This is nothing but $\r_c(\b)<+\infty$ where $\r_c(\b)$ is the {\it critical density} for the function $\b$, which does not depend on the energy levels in equilibrium thermodynamics. A key-role to study the appearance of the BEC is to look at the {\it critical density} $\r_c(\b)$ at inverse temperature $\b=1/T$, see \cite{BR}. In our situation (including the equilibrium case), the critical density $\r_c(\b)$ for the function $\b$ is given by
\begin{equation}
\label{rhocr}
\r_c(\b)=\int_{\br^d}\frac{\di^d{\bf p}}{e^{\b(p^2)p^2}-1}\,.
\end{equation}
We will also show in Section \ref{new} the natural role played by the critical density in studying the condensation effects. In the more general situation of LEP,
$\frac1{e^{\b(p^2)p^2}-1}\in L^1_{\rm loc}(\br^d)$ is in general weaker than $\r_c(\b)<+\infty$. This means that we can easily exhibit quasi-free states describing condensation effects, for which $\r_c(\b)=+\infty$. A similar phenomenon can happen in studying BEC in equilibrium thermodynamics for inhomogeneous systems. We show that this last phenomenon is of different nature than the analogous one described in \cite{F2, F3}. In order to do that, we start by looking at the local density of particles $\r_\om({\bf x})$
of a quasi-free state $\om$. It is given by
$$
\r_\om({\bf x})=\om(a^\dagger(\d_{\bf x})a(\d_{\bf x}))\,,
$$ 
where $\d_{\bf x}$ is the Dirac distribution centred in ${\bf x}\in\br^d$, provided that the r.h.s. is meaningful, otherwise it is infinite. For the state in \eqref{1ab1}, we get for the local density of particles
$$
\r_{\om_{\b,\n_{D,{\bf k}}}}({\bf x})
=\int_{\br^d}\frac{|\widehat{\d_{{\bf x}}}({\bf p})\big|^2}{e^{\b(p^2)p^2}-1}\di^d{\bf p}+D|\widehat{\d_{{\bf x}}}({\bf 0})\big|^2\,.
$$
From this simple calculation, we conclude that the $\r_{\om_{\b,D}}({\bf x})$ is homogeneous and assume the form
$$
\r_{\om_{\b,\n_{D,{\bf k}}}}({\bf x})=\r_c(\b)+\r_\text{cond}(\om_{\b,\n_{D,{\bf k}}})\,,
$$
where
$$
\r_\text{cond}(\om_{\b,\n_{D,{\bf k}}})=\lim_{\La\uparrow\br^d}\frac D{\vol(\La)}\int_{\La}\big|\widehat{\d_{{\bf x}}}({\bf 0})\big|^2\di^d{\bf x}=D\,.
$$
Thus, the local density of particles $\r_{\om_{\b,D}}({\bf x})$ of the state $\om_{\b,D}$ is finite if and only if the critical density \eqref{rhocr} of the model is finite too. This means nothing else that the models for which the critical density is infinite, still provide states exhibiting BEC effects, but all of them have infinite local density of particles.
Namely, such states are in some sense unphysical. 

We pass to investigate the role played by the natural action $\a_R$, $R\in O(d)$ of the rotation group on $\ccr(\check{\cd}(\br^d))$. For non trivial $R\in O(d)$ and non zero ${\bf k}$ as in Theorem \ref{tcbb}, we have $R{\bf k}\neq {\bf k}$ but $\|R{\bf k}\|^2=x_0$. Thus, if there exists $x_0>0$ such that $\lim_{x\to x_0}\b(x)x=0$, the rotation symmetry is spontaneously broken. With $\n$ any bounded positive Radon measure on the sphere $\bs_k\subset\br^d$ of radius $k$, it is easy to see that for $f,g\in\cd(\br^d)$, the quasi free-state with two-point function  
\begin{equation}
\label{rotcinv}
\om_{\b,\n}(a^\dagger(\check f)a(\check g)):=\int_{\br^d}\frac{f({\bf p})\overline{g({\bf p})}}{e^{\b(p^2)p^2}-1}\di^d{\bf p}
+\int_{\bs_k}f({\bf p})\overline{g({\bf p})}\di\n({\bf p}) 
\end{equation}
satisfies the LEP. We immediately see that
$$
\r_{\om_{\b,\n}}({\bf x})
=\r_c(\b)+\n(\br^d)\,,
$$
where the last addendum on the r.h.s. takes into account of the portion of the condensate. In general, $\om_{\b,\n}$ is not rotationally invariant. However, we get
\begin{Prop}
\label{pprroo}
With $D\geq0$ and $\Om_k$ the normalised rotationally invariant measure on the sphere $\bs_k$, the states $\om_{\b,D\Om_k}$ in \eqref{rotcinv}
are rotationally invariant.
\end{Prop}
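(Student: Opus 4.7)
The plan is to reduce rotation invariance of the state to rotation invariance of its two-point function, and then to check the two pieces of the two-point function in \eqref{rotcinv} separately.

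First I would make the rotation action explicit. The orthogonal group $O(d)$ acts unitarily on $\gph = L^2(\br^d)$ by $(U_R \psi)({\bf x}) := \psi(R^{-1}{\bf x})$. This unitary commutes with $h = -\D$ and preserves $\check{\cd}(\br^d)$, so it induces a Bogoliubov $*$-automorphism $\a_R$ of $\ccr(\check{\cd}(\br^d))$ defined on generators by $\a_R(W(\check f)) = W(U_R \check f)$. Since $\om_{\b,D\Om_k}$ is a mean-zero quasi-free state, it is fully determined by its two-point function, and hence $\a_R$-invariance of the state is equivalent to $\a_R$-invariance of the sesquilinear form $(f,g)\mapsto \om_{\b,D\Om_k}(a^\dagger(\check f)a(\check g))$. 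In Fourier picture, $U_R$ acts as $(U_R f)({\bf p}) = f(R^{-1}{\bf p})$ (the Fourier transform intertwines the two rotation actions).

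Next I would verify invariance of each of the two pieces of \eqref{rotcinv} under the substitution $f \mapsto f \circ R^{-1}$, $g \mapsto g \circ R^{-1}$. For the bulk term, the denominator $e^{\b(p^2)p^2}-1$ depends only on $p^2$, which is $O(d)$-invariant, and $\di^d{\bf p}$ is rotation invariant; the change of variable ${\bf p} \mapsto R{\bf p}$ then gives
\begin{equation*}
\int_{\br^d}\frac{f(R^{-1}{\bf p})\overline{g(R^{-1}{\bf p})}}{e^{\b(p^2)p^2}-1}\di^d{\bf p}
=\int_{\br^d}\frac{f({\bf p})\overline{g({\bf p})}}{e^{\b(p^2)p^2}-1}\di^d{\bf p}.
\end{equation*}
For the condensate term, the sphere $\bs_k$ is $O(d)$-invariant, and $\Om_k$ is by definition the (unique up to scalar) rotationally invariant probability measure on $\bs_k$, so $R_*\Om_k=\Om_k$ and
\begin{equation*}
D\int_{\bs_k}f(R^{-1}{\bf p})\overline{g(R^{-1}{\bf p})}\di\Om_k({\bf p})
=D\int_{\bs_k}f({\bf p})\overline{g({\bf p})}\di\Om_k({\bf p}).
\end{equation*}

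Combining the two identities, the two-point function of $\om_{\b,D\Om_k}$ is invariant under the rotation action on $\check{\cd}(\br^d)$, hence $\om_{\b,D\Om_k}\circ\a_R=\om_{\b,D\Om_k}$ for every $R\in O(d)$. There is essentially no obstacle here beyond the identification of the rotation action with a Bogoliubov automorphism; once that is in place, both terms reduce to standard change-of-variable computations, and the rotation invariance is a direct consequence of the defining invariance of $\Om_k$ and of Lebesgue measure.
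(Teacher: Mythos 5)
Your proposal is correct and follows essentially the same route as the paper: identify the rotation action with a Bogoliubov automorphism via the Fourier-intertwined unitary, reduce invariance of the quasi-free state to invariance of its two-point function, and check the bulk term by a change of variables (using rotation invariance of $p^2$ and of Lebesgue measure) and the condensate term by the defining invariance of $\Om_k$. The only cosmetic difference is that the paper writes the action with $R^t$ in place of $R^{-1}$, which is the same thing for orthogonal matrices.
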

\begin{proof}
Let $R\in O(d)$ be an orthogonal matrix with $R^t$ its transpose one. Define $f_R(x):=f(R^tx)$. We easily get
$\widehat{f_R}=(\hat f)_{R}$. By taking into account that the function $\frac1{e^{\b(p^2)p^2}-1}$ is rotationally invariant, we get after an elementary change of variable,
\begin{align*}
&\om_{\b,D\Om_k}\big(\a_R(a^\dagger(\check f)a(\check g))\big)=
\om_{\b,D\Om_k}(a^\dagger((\check f)_R)a((\check g)_R))\\
=&\int_{\br^d}\frac{f(R^t{\bf p})\overline{g(R^t{\bf p})}}{e^{\b(p^2)p^2}-1}\di^d{\bf p}
+D\int_{\bs_k}f(R^t{\bf p})\overline{g(R^t{\bf p})}\di\Om_k({\bf p})\\
=&\int_{\br^d}\frac{\hat f({\bf p})\overline{\hat g({\bf p})}}{e^{\b(p^2)p^2}-1}\di^d{\bf p}
+D\int_{\bs_k}f({\bf p})\overline{g({\bf p})}\di\Om_k({\bf p})\\
=&\om_{\b,D\Om_k}(a^\dagger(\check f)a(\check g))\,.
\end{align*}
\end{proof}

We already have discussed that the existence of quasi-free states exhibiting BEC for which the local density is finite, is determined by the convergence of the integral in \eqref{rhocr} describing the critical density. To see that we can have the condensation even for spatial dimensions different from the usual one $d\geq3$ (or $d\geq2$ for the fotonic/phononic Hamiltonian 
$h({\bf k})=k$), we consider simple examples for which $\b\in C\big((0,+\infty)\big)$, and
$$
\b(x)\approx x^{\a_0}\,\,\text{for}\,\,x\to 0^+\,;\quad \b(x)\approx \a_\infty\frac{\ln x}x\,\,\text{for}\,\,x\to+\infty\,.
$$
The condition $\r_c(\b)<+\infty$ leads to $\a_\infty>0$, whereas $\lim_{x\downarrow 0}\b(x)x=0$ leads to $\a_0+1>0$. Concerning the critical density, we compute for the one-particle Hamiltonian of the form $h({\bf k})=k^s$, $s\geq1$ (to avoid unphysical models, even if the last restriction plays no technical role),
$$
\int^{+\infty}_{0}\frac{\di^d{\bf p}}{e^{\b(p^s)p^s}-1}
\approx\int^{1}_{0}p^{d-1-s(\a_0+1)}\di p
+\int^{+\infty}_{1}p^{d-1-s\a_\infty}\di p\,,
$$
which converges if and only if 
$$
s(\a_0+1)<d<s\a_\infty\,.
$$
This means that for the NESS considered in the present paper, it is possible to have or not  BEC of free Bosons on $\br^d$ for dimensions different from those one finds in equilibrium thermodynamics.

\section{a new approach to the condensation}
\label{new}

The standard way to find states exhibiting BEC on a locally compact manifold $M$, which can be $\br^d$ (see e.g. \cite{BR}) 
or an infinitely extended network (cf. \cite{F2, F3}), is to start from the Bose-Gibbs grand canonical ensemble (cf. \cite{LL}) of the finite volume theories based on a fixed exhaustion $\{\La_n\}_{n\in\bn}$, $\La_n\uparrow M$, together with the associated sequence of finite volume Hamiltonians $\{H_{\La_n}\}_{n\in\bn}$. One considers the finite volume density
$$
\r_{\La}(\b,\m):=\int_{\br_+}\frac{\di N_{\La}(\eps)}{e^{\b(\eps)\eps-\m}-1}
$$
associated to a compact region $\La\subset M$. Here, with $H_\La=\int\eps\di e_\La(\eps)$, and
$N_{\La}$ is the cumulative function obtained as
$$
N_\La(\eps)=\frac{\tr(E_\La((-\infty,\eps]))}{\vol(\La)}\,,
$$ 
which is well defined provided $H_\La$ has compact resolvent. Under the usual conditions imposed to the function $\b$, the standard properties of the Hamiltonian $H$ and the associated finite volume ones $H_\La$, the function $\r_{\La}(\b,\m)$ has the behaviour described in Fig. \ref{fig}, with $\eps_0(\La)>0$ the finite volume ground state energy of the system.
\begin{figure}
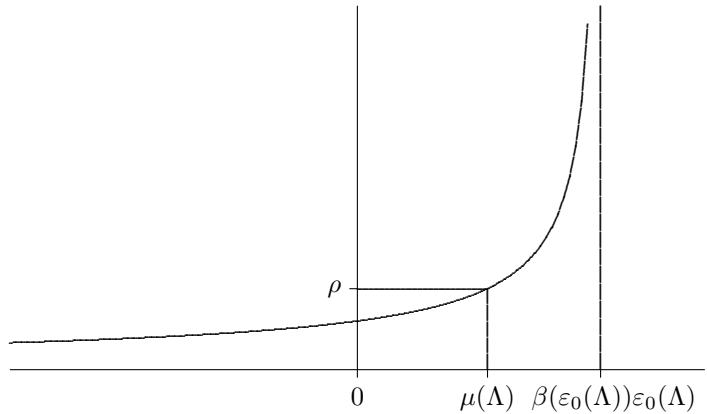

\hbox to\hsize\bgroup\hss
\beginpicture
\setcoordinatesystem units <1.8truein,.2truein>
\setplotarea x from -1 to 1, y from 0 to 9.5
\axis bottom ticks short withvalues {$0$} {$\m(\La)$} {$\quad\b(\eps_0(\La))\eps_0(\La)$}
/ at 0 0.3745455  0.7 / /
\axis left shiftedto x=0 ticks short withvalues {$\rho$} / at 2.1135204 / /
\plot  0 2.1135204  0.3745455 2.1135204 /
\plot 0.7 0  0.7 9.5 /
\plot  0.3745455 0  0.3745455 2.1135204 /
\plot
-1.           0.7020489
  -0.9830303    0.7067601
  -0.9660606    0.7115512
  -0.9490909    0.7164242
  -0.9321212    0.7213816
  -0.9151515    0.7264256
  -0.8981818    0.7315588
  -0.8812121    0.7367835
  -0.8642424    0.7421026
  -0.8472727    0.7475187
  -0.8303030    0.7530348
  -0.8133333    0.7586537
  -0.7963636    0.7643787
  -0.7793939    0.7702129
  -0.7624242    0.7761597
  -0.7454545    0.7822226
  -0.7284848    0.7884054
  -0.7115152    0.7947118
  -0.6945455    0.8011458
  -0.6775758    0.8077116
  -0.6606061    0.8144137
  -0.6436364    0.8212565
  -0.6266667    0.8282449
  -0.6096970    0.8353839
  -0.5927273    0.8426788
  -0.5757576    0.850135
  -0.5587879    0.8577584
  -0.5418182    0.8655550
  -0.5248485    0.8735312
  -0.5078788    0.8816938
  -0.4909091    0.8900497
  -0.4739394    0.8986065
  -0.4569697    0.9073719
  -0.44         0.9163543
  -0.4230303    0.9255624
  -0.4060606    0.9350053
  -0.3890909    0.9446927
  -0.3721212    0.9546351
  -0.3551515    0.9648433
  -0.3381818    0.9753288
  -0.3212121    0.9861039
  -0.3042424    0.9971817
  -0.2872727    1.0085759
  -0.2703030    1.0203012
  -0.2533333    1.0323734
  -0.2363636    1.0448092
  -0.2193939    1.0576263
  -0.2024242    1.0708439
  -0.1854545    1.0844823
  -0.1684848    1.0985635
  -0.1515152    1.1131108
  -0.1345455    1.1281495
  -0.1175758    1.1437067
  -0.1006061    1.1598115
  -0.0836364    1.1764955
  -0.0666667    1.1937928
  -0.0496970    1.2117401
  -0.0327273    1.2303776
  -0.0157576    1.2497487
    0.0012121    1.2699007
    0.0181818    1.2908852
    0.0351515    1.3127586
    0.0521212    1.3355828
    0.0690909    1.3594257
    0.0860606    1.384362
    0.1030303    1.4104742
    0.12         1.4378536
    0.1369697    1.4666016
    0.1539394    1.496831
    0.1709091    1.5286679
    0.1878788    1.5622534
    0.2048485    1.5977467
    0.2218182    1.6353271
    0.2387879    1.6751982
    0.2557576    1.7175919
    0.2727273    1.7627737
    0.2896970    1.8110489
    0.3066667    1.8627711
    0.3236364    1.9183517
    0.3406061    1.9782727
    0.3575758    2.0431033
    0.3745455    2.1135204
    0.3915152    2.1903362
    0.4084848    2.2745348
    0.4254545    2.3673211
    0.4424242    2.470188
    0.4593939    2.5850104
    0.4763636    2.7141784
    0.4933333    2.8607926
    0.5103030    3.0289567
    0.5272727    3.2242303
    0.5442424    3.454357
    0.5612121    3.7304879
    0.5781818    4.069352
    0.5951515    4.4973694
    0.6121212    5.059147
    0.6290909    5.8371213
    0.6460606    7.0048035
    0.6630303    9.0109256
    /
\endpicture
\hss\egroup
 \caption{{The finite volume chemical potential at fixed density $\r$.}}
     \label{fig}
\end{figure}
This means that, for each fixed density $\r>0$, the equation 
\begin{equation*}
\int_{\br_+}\frac{\di N_{\La}(\eps)}{e^{\b(\eps)\eps-\m}-1}=\r
\end{equation*}
in the unknown $\m$ has a unique solution 
$\m(\La)\in(-\infty,\b(\eps_0(\La))\eps_0(\La))$, which allows to determine the sequences $\{\m(\La_n)\}$ of the finite volume chemical potentials. The critical density is now defined as
$$
\r_c(\b):=\int_{\br_+}\frac{\di N(\eps)}{e^{\b(\eps)\eps}-1}\,,
$$
where $N$ is the Integrated Density of the States associated to the infinite volume Hamiltonian of the system (see \cite{F1} for the rigorous definition), which exists for most of the models of interests with natural choices of exhaustions. If $\r_c<+\infty$ and the fixed density in Fig. \ref{fig} of the system $\r>\r_c$, $\m(\La_n)\to0$ (possibly by passing to a converging subsequence if necessary), and the condensation phenomena take place. The standard approach briefly outlined here and commonly used to construct states exhibiting condensation, even if has a very clear physical justification, is extremely sensitive of the boundary conditions imposed to the finite volume Hamiltonians (see e.g. \cite{BR} and the literature cited therein), and a-priori also of the chosen finite volume exhaustion $\{\La_n\}_{n\in\bn}$ (see e.g. \cite{VLP} and the literature cited therein). Due to the new phenomena relative to the BEC arising from the LEP described in Section \ref{00000}, it is expected that this approach might be not flexible enough to select/construct explicitly all the possible states exhibiting BEC. For example, if one fix the density as described above, one might select only rotationally invariant states which exhaust all of them in the equilibrium situation but not in the more general setting of local equilibrium (cf. Theorem \ref{tcbb}). In addition, even in equilibrium thermodynamics, we will find states exhibiting BEC but for which the density of particles is infinite (cf. Proposition \ref{ddeldercs}). This suggests new approaches to the investigation of the BEC, which is the goal of the present section. We will see that this approach provides new and unexpected examples of states exhibiting condensation (cf. Section \ref{0nw1}), even in the usual situation of equilibrium thermodynamics where the temperature is fixed. 

For our purposes, following the model described in Section \ref{00000} we consider $h({\bf k})=k^s$. In order to exclude unphysical models, we suppose $s\geq1$, including the massive non relativistic Hamiltonian $h({\bf k})=k^2$ (with the normalised particle mass $m=1/2$), and the massless relativistic one $h({\bf k})=k$ (corresponding to the normalised speed of the light $c=1$, or the velocity of the sound 1 in the case of phonons).
In order to avoid technicalities, we make some reasonable restrictions to the function $\b:\br_+\to\br_+$. Put $\tilde\b(x):=\b(x)x$ and assume 
\begin{itemize}
\item[(i)] $\tilde\b\in C([0,+\infty))$, with $\inf_{\br_+}\tilde\b=0$.
\item[(ii)] If $E:=\tilde\b^{-1}(\{0\})$, then $E\subset[0,+\infty)$ is made of at most a finite numbers of points, with the empty set allowed. 
\item[(iii)] We assume also that
$$
\int^{+\infty}_{a}\frac{x^{\frac{d}s-1}}{e^{\tilde\b(x)}-1}\di x<+\infty\,,
$$
for each $a>\max E$.
\end{itemize}
Assuming the continuity of temperature function is just to avoid technicalities, yet providing nontrivial examples, whereas $\inf_{\br_+}\tilde\b=0$ can be assumed without loosing generality just by passing to the new function $\tilde\b-\inf\tilde\b$. The second condition is to avoid too general situations, perhaps already considered in  Section \ref{00000}. The third one says that $\frac1{e^{\b(k^s)k^s-1}}\in L^1_\text{loc}(\br^d)$ is simply equivalent to $\frac1{e^{\b(k^s)k^s-1}}\in L^1(\br^d)$ as for equilibrium thermodynamics.
Denote
$$
\bs_r:=\{({\bf p},{\bf k})\in\br^d\times\br^d\mid {\bf p}^2={\bf k}^2=r^2\}\,.
$$
the sphere of radius $r$ living in the diagonal of the product space $\br^d\times\br^d$. As our approach is quite general, we can manage the general $q$-Commutation Relations for 
$q\in[-1,1]$, $\pm1$ being the Bose-Fermi alternative. By using Fourier Transform, we can consider the density creators and annihilators in momentum space which are operator-valued distributions, by putting 
\begin{equation}
\label{cfzx}
a^\dagger(\check f)=\int_{\br^d}f({\bf k})a^\dagger({\bf k})\di^d{\bf k}\,,\quad a(\check g)=\int_{\br^d}\overline{g({\bf k})}a({\bf k})\di^d{\bf k}\,.
\end{equation}
Then the $q$-Commutation Relations, corresponding to \eqref{cccrr}, \eqref{ccarr} for the Bose-Fermi cases, can be rewritten as
\begin{equation}
\label{ccqrr}
a({\bf k})a^\dagger({\bf p})-qa^\dagger({\bf p})a({\bf k})=\d({\bf k}-{\bf p})\idd\,.
\end{equation}
It can be proven that the commutation relations \eqref{ccqrr}, still generate an abstract $C^*$-algebra, even for $q\in(-1,1)$, see e.g. \cite{JSW}. In all the situations 
$q\in[-1,1]$ considered here including the Bose/Fermi situation $q=\pm1$, we denote by $\ccr_q(\gph)$ the involved $C^*$-algebra. In addition,
the quasi-free states are also well defined for the deformed cases $-1<q<1$ because their  $2n+1$-point functions is $0$, and $2n$-point functions are described by the so-called $q$-{\it determinant} (being the case $q=-1$ indeed a determinant, known as the Slater determinant)
which is seen to be positive for the case considered here, see \cite{BS, Li, OTW}.
The LEP for $q$-particles, $q\in[-1,1]$, still assumes the form \eqref{lepccr2}
in Definition \ref{lepcafi}.

The new approach in searching quasi-free states $\om$ exhibiting condensation is to look at those for which the two-point function is given by a distribution $F_\om$,
at least for the situation of free particles considered in the present paper. For such a purpose, we use the conventional integration symbology of the theory of Distributions (cf. \cite{Vl})
for the natural pairing $F(f)$ between smooth functions $f$ and elements $F$ of the topological duals
$$
f\times F\in\cd(\br^l)\times\cd'(\br^l)\mapsto F(f)=:\int_{\br^l}F(\boldsymbol{\xi})f(\boldsymbol{\xi})\di^l\boldsymbol{\xi}\,.
$$
With this symbology, for $f,g\in\cd(\br^d)$, we compute by considering \eqref{cfzx},
\begin{align*}
\om(a^\dagger(\check f)a(\check g))
=&\int_{\br^d\times\br^d}\om(a^\dagger({\bf p})a({\bf k}))
f({\bf p})\overline{g({\bf k})}\di^d {\bf p}\di^d {\bf k}\\
=&\int_{\br^d\times\br^d}F_\om({\bf p},{\bf k})f({\bf p})\overline{g({\bf k})}\di^d {\bf p}\di^d {\bf k}\,,
\end{align*}
where $F_\om\in\cd'(\br^d\times\br^d)$ is some distribution given by $F_\om({\bf p},{\bf k}):=\om(a^\dagger({\bf p})a({\bf k}))$.  
By positivity, such a distribution should satisfy for each $f\in\cd(\br^d)$,
$$
\int_{\br^d\times \br^d}F_\om({\bf p},{\bf k})f({\bf p})\overline{f({\bf k})}\di^d {\bf p}\di^d {\bf k}\geq0\,.
$$
A positive definite distribution as above is said to be a {\it kernel}. In addition, a kernel is automatically real, that is it should satisfy in the sense of distribution,
\begin{equation}
\label{cerelz}
F_\om({\bf k},{\bf p})=\overline{F_\om({\bf p},{\bf k})}\,,
\end{equation}
where the bar stands for complex conjugation. 

Let $\om\in\cs(\ccr_q(\check\cd(\br^d))$ be a quasi-free state whose two-point function satisfies \eqref{lepccr2}.
Suppose further that its two-point function is given by a kernel $F_\om$ as described above. When we try to impose condition \eqref{lepccr2} to $F_\om$, a product of a distribution with a function which is in general not smooth, shall appear. This product does not define any distribution in general. Thus, we suppose further that 
$e^{(\b(h({\bf p}))h({\bf p})-\m)}F_\om({\bf p},{\bf k})$ defines still a distribution. Notice that, by positivity, $e^{(\b(h({\bf k}))h({\bf k})-\m)}F_\om({\bf p},{\bf k})$ also defines a distribution. This will play a crucial role in the following.
By using the commutation rule \ref{ccqrr}, the LEP 
\eqref{lepccr2} leads for the chemical potential $\m$, and the parameter $q\in[-1,1]$ to
\begin{align*}
&\int_{\br^d\times\br^d}\big(e^{\b(h({\bf k}))h({\bf k})-\m}\big)\om(a^\dagger({\bf p})a({\bf k}))
f({\bf p})\overline{g({\bf k})}\di^d {\bf p}\di^d {\bf k}\\
=&\om(a^\dagger(\check f)a(\g_\m(h)\check g))
=\om(a(\check g)a^\dagger(\check f))
=\langle g|f\rangle+q\om(a^\dagger(\check f)a(\check g))\\
=&\int_{\br^d\times\br^d}(\d({\bf p}-{\bf k})+q\om(a^\dagger({\bf p})a({\bf k})))
f({\bf p})\overline{g({\bf k})}\di^d {\bf p}\di^d {\bf k}\,,
\end{align*}
obtaining
$$
\int_{\br^d\times\br^d}\bigg[\big(e^{\b(h({\bf k}))h({\bf k})-\m}-q\big)F_\om({\bf p},{\bf k})-\d({\bf p}-{\bf k})\bigg]
f({\bf p})\overline{g({\bf k})}\di^d {\bf p}\di^d {\bf k}=0\,.
$$
As the last should be satisfied for the total set in $\cd(\br^d\times\br^d)$ generated by elementary tensors, we obtain
\begin{equation}
\label{cedcae}
\big(e^{\b(h({\bf k}))h({\bf k})-\m}-q\big)F_\om({\bf p},{\bf k})=\d({\bf p}-{\bf k})\,.
\end{equation}
Combining the reality condition \eqref{cerelz} with \eqref{cedcae}, we get also
\begin{equation}
\label{cedcae1}
\big(e^{\b(h({\bf p}))h({\bf p})-\m}-q\big)F_\om({\bf p},{\bf k})=\d({\bf p}-{\bf k})\,.
\end{equation}
The above computations tells us nothing but that the kernel $F_\om$ reproducing a quasi-free state on $\ccr_q(\check\cd(\br^d))$, which solves \eqref{cedcae} (or equivalently \eqref{cedcae1}) are the natural candidate to describe those satisfying the LEP for the $q$-deformed relations at the inverse temperature function $\b$ and chemical potential $\m$. 
As we will see below, in searching solutions of \eqref{cedcae}, 
we should divide for the function $e^{(\b(h({\bf p}))h({\bf p})-\m)}-q$ which by positivity should be everywhere positive, with zero possibly allowed on a negligible set w.r.t. Lebesgue measure on the momentum space.
This allows to compute the possible range of the chemical potential, leading to $\m\in(-\infty,\m_q]$, where $\m_q=+\infty$ for $q\in[-1,0]$, and $\m_q=-\ln q$ for $q\in(0,1]$, 0 being the limiting value between the two situations. As it will be more clear below, the the previous cases can be considered as the Fermi/Bose-like alternative, with the separation case $q=0$ is known to be the the Boltzmann (or free) one.

In the Fermi-like/Boltzmann cases $q\in[-1,0]$, including the limiting Boltzmann case $q=0$, we can freely solve \eqref{cedcae} for each $\m\in\br$. So the concept of {\it critical density} plays a role only in the Bose-like case $q\in(0,1]$. For such values of $q$,  \eqref{cedcae} (or equivalently  \eqref{cedcae1}) can be solved only if $\m<\m_q:=-\ln q$, being $\m_q$ the critical value of the chemical potential for which the condensation can occur. Concerning the critical density at the inverse temperature function $\b$, it is defined as
$\r_c^{(q)}(\b):=\r^{(q)}(\b,\m_q)$ which leads to
\begin{equation}
\label{qqdensi}
\r_c^{(q)}(\b):=\int_{\br^d}\frac{\di^d{\bf p}}{e^{\b(h(p))h(p)-\m_q}-q}=\frac{\r_c^{(1)}(\b)}q\,.
\end{equation}
We will se below the role played by the limiting value of the chemical potential $\m_q$ and critical density $\r_c^{(q)}$, in the appearance of the condensate of $q$-particles.
As before, $\r_c^{(1)}$ is simply denoted as $\r_c$. We also omit the dependence on $\b$ when this causes no matter of confusion.

According to the previous computations, the following theorem explains the form of the kernels which are the candidate to reproduce the two-point function of a quasi-free state satisfying LEP. Among the other things,
we show how the condensation regime naturally emerge without using the thermodynamic limit of finite volume theories. 
\begin{Thm}
\label{main}
Suppose that the inverse temperature function $\b$ fulfils (i)-(iii) above. Let $F$ be a kernel on $\br^d$ such that $e^{(\b(h({\bf k}))h({\bf k})-\m)}F({\bf p},{\bf k})$ is still a distribution satisfying \eqref{cedcae1}. Then the following assertions hold true.
\begin{itemize}
\item[(i)] If $\m>-\ln(0\vee q)$ (with the convention that $-\ln0=+\infty$), no of such kernels can exist.
\item[(ii)] {\bf non condensation regime}: For each $\m<-\ln(0\vee q)$, there exists only one kernel as above having the form
\begin{equation}
\label{kdensi}
F({\bf k},{\bf p})=\frac{\d({\bf k}-{\bf p})}{e^{\b(h(p))h(p)-\m}-q}\,.
\end{equation}
\item[(iii)] {\bf condensation regime}: Let $q\in(0,1]$ and $\m=\m_q$. If $E=\emptyset$, then $\r_c^{(q)}<+\infty$ and 
\begin{equation}
\label{kdensi0}
F({\bf p},{\bf k})=\frac{\d({\bf p}-{\bf k})}{q(e^{\b(h({\bf p}))h({\bf p})}-1)}
\end{equation}
is the unique kernel fulfilling the hypotheses.

Let $E\neq\emptyset$, and suppose that $\b\lceil_{\br_+\backslash E}$ is infinitely often differentiable. 

If $\r^{(q)}_c(\b)=+\infty$, there is no of such kernels satisfying the additional condition
$|f|\leq|g|\Rightarrow F(f\otimes\bar f)\leq F(g\otimes\bar g)$.

If $\r^{(q)}_c<+\infty$, then 
$F$ assumes the form
\begin{equation}
\label{kdensi1}
F({\bf p},{\bf k})=\frac{\d({\bf p}-{\bf k})}{q(e^{\b(h({\bf p}))h({\bf p})}-1)}+G({\bf p},{\bf k})\,,
\end{equation}
where $G\in\cd'(\br^d\times\br^d)$ is supported in $\bigcup\big\{\bs_{r^{\frac1s}}\mid r\in E\big\}$. 
\end{itemize}
\end{Thm}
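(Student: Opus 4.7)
The plan is to view Theorem \ref{main} as a division problem in $\cd'(\br^d\times\br^d)$ for the coefficient $g_\m({\bf p}):=e^{\tilde\b(h({\bf p}))-\m}-q$, and to read off the three regimes from the sign and zero set of $g_\m$. Positivity, the reality condition \eqref{cerelz}, and the two equations \eqref{cedcae}--\eqref{cedcae1} together reduce the classification of admissible kernels $F$ to a linear division problem modulated by the positive-definiteness constraint.

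For (i), with $\m>-\ln(0\vee q)$, necessarily $q>0$ and $\m>\m_q$, so $g_\m$ is strictly negative on some nonempty open set $U$: a neighbourhood of a point of $\{|{\bf p}|^s\in E\}$ when $E\neq\emptyset$, or a sub-level set $\{\tilde\b(h({\bf p}))<\m-\m_q\}$ when $E=\emptyset$ (nonempty since $\inf_{\br_+}\tilde\b=0$). On $U$, $g_\m$ is smooth and non-vanishing, so \eqref{cedcae1} forces $F|_{U\times U}=\d({\bf p}-{\bf k})/g_\m({\bf p})$; testing against $\phi\otimes\bar\phi$ for any nonzero $\phi\in\cd(U)$ then gives $F(\phi\otimes\bar\phi)=\int_U|\phi|^2/g_\m<0$, contradicting positive definiteness. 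For (ii), $g_\m>0$ everywhere and $1/g_\m\in C^\infty$, so \eqref{kdensi} defines the unique positive kernel solving the equation, since multiplication by a nowhere-vanishing smooth function is an automorphism of $\cd'(\br^d\times\br^d)$.

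For (iii), with $q\in(0,1]$ and $\m=\m_q$, $g_{\m_q}({\bf p})=q(e^{\tilde\b(h({\bf p}))}-1)$ vanishes precisely on $Z:=\bigcup_{r\in E}\{|{\bf p}|=r^{1/s}\}$. The case $E=\emptyset$ reduces to (ii): continuity of $\tilde\b$ with $\tilde\b>0$ on $[0,+\infty)$, combined with condition (iii) and a polar-coordinate computation, yields $\r_c^{(q)}<+\infty$ and uniqueness of \eqref{kdensi0}. For $E\neq\emptyset$ with $\r_c^{(q)}<+\infty$, the same polar computation gives $1/g_{\m_q}\in L^1_{\rm loc}(\br^d)$, so $F_0:=\d({\bf p}-{\bf k})/g_{\m_q}({\bf p})$ is a bona fide positive kernel solving \eqref{cedcae1}; any other solution differs from $F_0$ by $G\in\cd'(\br^d\times\br^d)$ annihilated by both $g_{\m_q}({\bf p})$ (from \eqref{cedcae1}) and $g_{\m_q}({\bf k})$ (from \eqref{cedcae}), forcing $\supp G\subset Z\times Z$, and the invariance of $\om$ under the free dynamics \eqref{tievo} — which forces $(h({\bf p})-h({\bf k}))F=0$, i.e.\ $\supp F\subset\{|{\bf p}|=|{\bf k}|\}$ — cuts this down to $\bigcup_{r\in E}\bs_{r^{1/s}}$, giving \eqref{kdensi1}. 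When $\r_c^{(q)}=+\infty$, the forced restriction $F|_{(\br^d\setminus Z)^2}=\d/g_{\m_q}$ and the monotonicity hypothesis give $F(\phi\otimes\bar\phi)\geq \int|\phi|^2/g_{\m_q}$ for any $\phi\in\cd(\br^d)$ vanishing near $Z$; taking an increasing sequence $\phi_n$ saturating a compact neighbourhood of a sphere in $Z$ makes the right-hand side diverge, contradicting continuity of $F$ on $\cd(\br^d\times\br^d)$.

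The main obstacle I expect is the support analysis in part (iii) for $E\neq\emptyset$: one needs the division-of-distributions theorem applied to the smooth coefficient $g_{\m_q}$ (smooth on $\br^d\setminus Z$ thanks to the hypothesis that $\tilde\b\lceil_{\br_+\setminus E}$ is infinitely often differentiable) vanishing to first order on the regular hypersurface $Z$, in order to classify the homogeneous solutions as finite sums of transverse Dirac masses, and then the rotational and time-translation symmetries of the free dynamics to collapse the support from $Z\times Z$ to the equal-radius subset $\bigcup_{r\in E}\bs_{r^{1/s}}$. The divergent-$\r_c^{(q)}$ subcase is delicate as well, because without the monotonicity hypothesis one could conceivably regularise $1/g_{\m_q}$ by a principal-value type construction across $Z$, so that hypothesis is essential for the non-existence conclusion.
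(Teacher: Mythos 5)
Your overall strategy coincides with the paper's: parts (i) and (ii) are settled by the sign of $e^{\tilde\b(h({\bf p}))-\m}-q$ exactly as you do (the paper phrases (i) as a.e.\ positivity of the occupation density $n_{\bf p}$), the case $E=\emptyset$ of (iii) is handled identically, and your divergence argument for $\r^{(q)}_c=+\infty$ is the paper's monotone-sequence argument. Two small corrections there. First, the contradiction is not with ``continuity of $F$'' but with the bound $F(\f_n\otimes\bar\f_n)\le F(\f\otimes\bar\f)<+\infty$ supplied by the monotonicity hypothesis with a fixed dominating test function $\f$ equal to $1$ on a shell around a sphere of $Z$; your setup already contains this, so it is only a matter of stating the right punchline. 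Second, in (ii) the function $\tilde\b$ is only assumed continuous (smoothness off $E$ is hypothesised only in the last sub-case of (iii)), so $1/g_\m$ need not be $C^\infty$; existence and uniqueness must be extracted from the standing assumption that the product $g_\m F$ is a distribution together with $g_\m$ continuous and bounded away from zero on compacts, not from ``multiplication by a smooth nonvanishing function is an automorphism of $\cd'$''.

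The genuine gap is the last step of (iii). From the pair \eqref{cedcae}--\eqref{cedcae1} you correctly get $\supp G\subset Z\times Z$ with $Z=h^{-1}(E)$, but to shrink this to the equal-radius set $\bigcup_{r\in E}\bs_{r^{1/s}}$ you invoke invariance of $\om$ under the free dynamics, which would give $(h({\bf p})-h({\bf k}))F=0$. That is not among the hypotheses: the theorem is a statement about kernels satisfying \eqref{cedcae1} alone, and the paper explicitly remarks in Section 3 that invariance under the Bogoliubov dynamics is not automatic for states satisfying the LEP. As written, your argument therefore proves only $\supp G\subset Z\times Z$, which coincides with the asserted conclusion exactly when $E$ is a singleton, and proves the full claim only under an imported dynamical assumption. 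The paper's route is different at this point and uses no dynamics: for $\f$ supported in the complement of $\bigcup_r\bs_{r^{1/s}}$ it builds a partition of unity separating the region where $h({\bf k})\in E$ (where it applies \eqref{cedcae1} and divides by the smooth, nonvanishing ${\bf p}$-coefficient), the region where $h({\bf p})\in E$ (where it applies \eqref{cedcae} and divides by the ${\bf k}$-coefficient), and the rest, concluding $G(\f)=0$ directly. You should adopt that two-sided division argument; note, however, that the points of $Z\times Z$ it genuinely controls beyond yours are those where exactly one of $h({\bf p}),h({\bf k})$ lies in $E$, while at points where $h({\bf p})$ and $h({\bf k})$ are two \emph{distinct} elements of $E$ neither coefficient is invertible, so for $|E|\ge 2$ some further argument is required on that exceptional set in either approach. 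Finally, your proposed classification of the homogeneous solutions as transverse Dirac masses is more than the theorem asserts and would anyway require $\tilde\b$ to vanish to first order on $E$, which is not guaranteed; only the support statement is needed.
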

\begin{proof}
(i) and (ii): Being $n_{\bf p}=\frac{1}{e^{\b(h({\bf p}))h({\bf p})-\m}-q}$
the density of occupation number at momentum $\bf p$ and chemical potential $\m$, it must be a.e. positive, w.r.t. the Lebesgue  measure. This excludes $\m>-\ln(0\vee q)$. Conversely, if $\m<-\ln(0\vee q)$ we can uniquely solve \eqref{cedcae} by obtaining \eqref{kdensi}.
\vskip.3cm
\noindent
(iii): By (ii), the condensation regime can occur only if $q\in(0,1]$ and $\m=-\ln q$. So we reduce the matter to this case. If $E=\emptyset$, then condition (iii) at the beginning of the present section together \eqref{qqdensi} imply that $\r_c^{(q)}<+\infty$. In addition, being $e^{\b(h({\bf p}))h({\bf p})}-1>0$ everywhere, \eqref{cedcae1} can be freely solved w.r.t. $F$ giving as the unique solution \eqref{kdensi0}. Suppose now $E\neq\emptyset$. We first show that, under the quite natural conditions 
imposed to $\b$ and to the kernel $F$, the critical density must be finite. Thanks to \eqref{qqdensi}, we can reduce the matter to $q=1$. Suppose that $\r_c=+\infty$. By condition (iii) above, it means that $\int_{B_{x_0}}\frac{x^{\frac{d}{s}-1}}{e^{\b(x)x}-1}\di x=+\infty$, at least for some small neighbourhood $B_{x_0}$ centred in $x_0\in E$ of radius $\d$ containing eventually no further points of 
$E$. Choose a positive function $\f\in\cd(\br^d)$ which is identically 1 on the spherical shell of thickness $\d$ around the sphere 
$h({\bf p})=x_0$ (the sphere of radius $\d/2$ centred in ${\bf 0}$ if $x_0=0$), and whose support does not contain further points ${\bf p}$ with $h({\bf p})\in E$. Choose a sequence $\{\f_n\}_{n\in\bn}\subset\cd(\br^d)$ of positive functions with $\f_n<\f$, such that the sphere $h({\bf p})=x_0$ is not contained in their support, monotonically converging point-wise a.e. (w.r.t. the Lebesgue measure) to $\f$. By hypothesis on $F$, we have for each $n\in\bn$,
$$
\int_{\br^d\times\br^d}F({\bf p},{\bf k})\f_n({\bf p})\f_n({\bf k})\di^d {\bf p}\di^d {\bf k}
\leq\int_{\br^d\times\br^d}F({\bf p},{\bf k})\f({\bf p})\f({\bf k})\di^d {\bf p}\di^d {\bf k}<+\infty\,.
$$
On the other hand, as $\frac{\f_n({\bf p})\f_n({\bf k})}{e^{\b(h({\bf p}))h({\bf p})}-1}$ is a smooth compactly supported function, 
by using \eqref{cedcae1} and the Monotone Convergence Theorem, we obtain
\begin{align*}
&\int_{\br^d\times\br^d}F({\bf p},{\bf k})\f_n({\bf p})\f_n({\bf k})\di^d {\bf p}\di^d {\bf k}\\
=&\int_{\br^d\times\br^d}\bigg(\big(e^{\b(h({\bf p}))h({\bf p})}-1\big)F({\bf p},{\bf k})\bigg)
\frac{\f_n({\bf p})\f_n({\bf k})}{e^{\b(h({\bf p}))h({\bf p})}-1}\di^d {\bf p}\di^d {\bf k}\\
=&\int_{\br^d\times\br^d}\frac{\d({\bf p}-{\bf k})}{e^{\b(h({\bf p}))h({\bf p})}-1}\f_n({\bf p})\f_n({\bf k})\di^d {\bf p}\di^d {\bf k}\\
=&\int_{\br^d}\frac{\f_n({\bf p})^2}{e^{\b(h({\bf p}))h({\bf p})}-1}\di^d {\bf p}\to
\int_{\br^d}\frac{\f({\bf p})^2}{e^{\b(h({\bf p}))h({\bf p})}-1}\di^d {\bf p}\\
&\geq\int_{B}\frac{\di^d {\bf p}}{e^{\b(h({\bf p}))h({\bf p})}-1}\di^d {\bf p}=+\infty
\end{align*}
which is a contradiction. 

Suppose now that $\r_c<+\infty$. Then $\frac{\d({\bf p}-{\bf k})}{e^{\b(h({\bf p}))h({\bf p})}-1}$ defines a distribution on $\br^d\times\br^d$ because of (iii) at the beginning of the section. Then 
$$
G({\bf p},{\bf k}):=F({\bf p},{\bf k})-\frac{\d({\bf p}-{\bf k})}{e^{\b(h({\bf p}))h({\bf p})}-1}
$$
defines also a distribution. Now we show that $\supp G=\bigcup\big\{\bs_{r^{\frac1s}}\mid r\in E\big\}$. Pick $\f\in\cd(\br^d\times\br^d)$ with 
$K:=\supp\f\subset\big\{\bs_{r^{\frac1s}}\mid r\in E\big\}^c$. We can suppose without loosing generality that $K$ is a smooth manifold with boundary. Consider the compact slices $K_1:=K\cap\{({\bf p},{\bf k})\mid h({\bf k})\in E\}$, $K_2:=K\cap\{({\bf p},{\bf k})\mid h({\bf p})\in F\}$ and choose open neighbourhoods $V_i\supset K_i$, $i=1,2$, such that the distances between $K\backslash(V_1\cup V_2)$ and $K_i$, $i=1,2$, $K\backslash(V_1\cup V_2)$ and $\bigcup\big\{\bs_{r^{\frac1s}}\mid r\in F\big\}$ is greater of $2\d$ for some $\d>0$. Then there exists a finite open covering of $K$ made of $V_i\supset K_i$, $i=1,2$, (which cover the $K_i$), and some balls $\{B_j\}_{j=1}^n$ centred in some points $\{\xi_j\}_{j=1}^n\subset K\backslash(V_1\cup V_2)$ of radius $\d$ (which cover $K\backslash(V_1\cup V_2)$). Fix a smooth partition of the unity 
$$
\psi=\psi_1+\psi_2+\sum_{i=1}^n f_i
$$
subordinate to this covering. After using \eqref{cedcae1} for the addendum containing $\psi_1$, \eqref{cedcae} for that containing $\psi_2$, and finally indifferently one of them for the addenda containing the $f_i$, we first note that 
$$
\frac{\f({\bf p},{\bf k})}{e^{\b(h({\bf p}))h({\bf p})}-1}\bigg(\psi_1({\bf p},{\bf k})+\sum_{i=1}^n f_i({\bf p},{\bf k})\bigg)\,,
\quad\frac{\f({\bf p},{\bf k})\psi_2({\bf p},{\bf k})}{e^{\b(h({\bf k}))h({\bf k})}-1}
$$
are smooth compactly supported functions. Then we compute,
\begin{align*}
&\int_{\br^d\times\br^d}
G({\bf p},{\bf k})
\f({\bf p},{\bf k})\di^d {\bf p}\di^d {\bf k}\\
=&\int_{\br^d\times\br^d}\bigg((e^{\b(h({\bf p}))h({\bf p})}-1)F({\bf p},{\bf k})-\d({\bf p}-{\bf k})\bigg)\\
\times&\frac{\f({\bf p},{\bf k})}{e^{\b(h({\bf p}))h({\bf p})}-1}
\bigg(\psi_1({\bf p},{\bf k})+\sum_{i=1}^n f_i({\bf p},{\bf k})\bigg)
\di^d {\bf p}\di^d {\bf k}\\
+&\int_{\br^d\times\br^d}\bigg((e^{\b(h({\bf k}))h({\bf k})}-1)F({\bf p},{\bf k})-\d({\bf p}-{\bf k})\bigg)\\
\times&\frac{\f({\bf p},{\bf k})\psi_2({\bf p},{\bf k})}{e^{\b(h({\bf k}))h({\bf k})}-1}
\di^d {\bf p}\di^d {\bf k}=0\,.
\end{align*}
\end{proof}
We point out that the investigation of the condensation phenomena of $q$-particles in the setting considered in the present section, can be reduced to find the positive definite solutions of the equation \eqref{cedcae} (or equivalently of the twin one \eqref{cedcae1}) in the space of the distributions. As shown in Theorem \ref{main}, it is a difficult task to find all the possible solutions of that equation. In addition, not all the distributions $G$ appearing in \eqref{kdensi1} give rise to a quasi-free states satisfying LEP and exhibiting condensation. However, we will see in Section \ref{0nw1} that, even in the standard equilibrium case when $\b$ is the inverse temperature of the system, \eqref{cedcae} admits solutions of the form \eqref{kdensi1} with nontrivial $G$, hence describing condensation effects, which are completely new and unexpected. Concerning the quasi-free states described in the previous section, the two-point function \eqref{rotcinv} also have the form \eqref{kdensi1} with nontrivial $G$, including those in Proposition \ref{pprroo} which are rotationally invariant, even if the local inverse temperature is more general than those considered in Theorem \ref{main}.


\section{new states describing condensation}
\label{0nw1}

In the present section we show that new states exhibiting condensation can occur even in the standard equilibrium thermodynamics. We reduce the matter to the Bosonic case with $\b$ the fixed inverse temperature. As usual, we put $h({\bf p})=p^s$, 
$s\geq1$, and reduce the analysis to the simplest case of the partial derivative along only one direction $k_1$ in the momentum space. As for $\f\in\cd(\br^d)$, $e^{\imath p^s t}\f({\bf p})$ is in general not smooth when $s=1$ or $s$ is not integer, we slightly extend the space of the test functions, even if it is not necessary for the unique physical case $s=2$ for which it is possible to exhibit such new states describing BEC. We put 
$$
\gph_0=\svv\{e^{\imath h t}\check\f\mid\f\in\cd(\br^d)\,,t\in\br\}\,.
$$
In this case, the one-particle dynamics is meaningful on $\gph_0$ by construction.
\begin{Prop}
\label{ddeldercs}
Fix $D>0$ and suppose that the critical density $\r_c<+\infty$. Then the states whose two-point function is given by
\begin{equation}
\label{delder}
\om(a^\dagger(\check f)a(\check g))=\int_{\br^d}\frac{f({\bf p})\overline{g({\bf p})}}{e^{\b p^s}-1}\di^d{\bf p}
+D\frac{\partial f}{\partial p_1}({\bf 0})\frac{\partial \bar g}{\partial p_1}({\bf 0})\,,\quad \check f, \check g\in\gph_0\,,
\end{equation}
are equilibrium states for the dynamics generated by the one-particle Hamiltonian $h({\bf p})=p^s$, provided $s>1$.
\end{Prop}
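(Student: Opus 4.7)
The plan is to recognise the two-point function \eqref{delder} as arising from the distributional kernel
$$
F({\bf p}, {\bf k}) = \frac{\delta({\bf p} - {\bf k})}{e^{\beta p^s} - 1} + D\,\partial_{p_1}\delta({\bf p})\otimes \partial_{k_1}\delta({\bf k})\,,
$$
and to apply Theorem \ref{main}(iii) specialised to constant $\beta$. In that case $\tilde\beta^{-1}(\{0\}) = \{0\}$, so the condensate part of $F$ must be supported on $\bs_0 = \{({\bf 0},{\bf 0})\}$, and the tensor product $\partial_{p_1}\delta\otimes\partial_{k_1}\delta$ is a natural distribution of that type. With this identification the proof reduces to checking: (a) that \eqref{delder} is well defined on $\gph_0$; (b) positivity of the sesquilinear form; and (c) that $F$ satisfies the distributional identities \eqref{cedcae}--\eqref{cedcae1} with $\mu = 0$, which is the LEP \eqref{lepccr2} and, in the constant-$\beta$ case, is the KMS boundary condition.

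For (a), the only subtle point is the value of $\partial_{p_1}f$ at the origin for a generic element $\check f\in\gph_0$ with $f({\bf p}) = e^{ip^s t}\phi({\bf p})$, $\phi\in\cd(\br^d)$. Differentiating the product gives $\partial_{p_1}f({\bf 0}) = \partial_{p_1}\phi({\bf 0}) + it\phi({\bf 0})\cdot\partial_{p_1}(p^s)|_{{\bf p}={\bf 0}}$, and the chain-rule factor $\partial_{p_1}(p^s) = sp^{s-2}p_1$ tends to zero as ${\bf p}\to{\bf 0}$ precisely when $s > 1$, so $\partial_{p_1}f({\bf 0})=\partial_{p_1}\phi({\bf 0})$ exists. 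The case $s = 1$ is genuinely obstructed, because the one-sided derivatives of $e^{i\|{\bf p}\|t}$ at the origin along $\pm e_1$ are $\pm it$; this is exactly where the hypothesis $s > 1$ enters. Positivity (b) is immediate from the non-negativity of both summands in \eqref{delder}, and the same applies to the form evaluated on $e^{\beta h}\gph_0$, which lies in the domain thanks to the same vanishing at ${\bf p}={\bf 0}$.

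For (c), the singular part of $F$ produces $\delta({\bf p}-{\bf k})$ after multiplication by $e^{\beta p^s} - 1$ by direct cancellation, meaningful because $\r_c<+\infty$ puts $1/(e^{\beta p^s}-1)$ in $L^1_{\rm loc}$. For the condensate summand $G = D\,\partial_{p_1}\delta\otimes\partial_{k_1}\delta$, the computation
$$
\langle (e^{\beta p^s}-1) G, \psi\rangle = D\,\partial_{k_1}\big\{\partial_{p_1}\big[(e^{\beta p^s}-1)\psi\big]\big|_{{\bf p}={\bf 0}}\big\}\big|_{{\bf k}={\bf 0}}
$$
vanishes for every $\psi \in \cd(\br^d\times\br^d)$, because at ${\bf p} = {\bf 0}$ both $(e^{\beta p^s}-1)$ and $\partial_{p_1}(e^{\beta p^s}) = \beta sp^{s-2}p_1 e^{\beta p^s}$ are zero when $s > 1$; the symmetric calculation replacing $p^s$ by $k^s$ gives $(e^{\beta k^s}-1)G = 0$ identically. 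Hence \eqref{cedcae}--\eqref{cedcae1} hold, the LEP \eqref{lepccr2} is satisfied, and by the standard prescription (Section 5.2.3 of \cite{BR}) we obtain a quasi-free state on $\ccr(\gph_0)$ which the KMS condition renders stationary under the Bogoliubov dynamics generated by $h$. The main subtle point throughout is the distributional calculus at the singular locus ${\bf p}={\bf 0}$, where simultaneously $e^{\beta p^s}-1$ vanishes and $G$ is concentrated; the order of the derivatives of $\delta$ appearing in $G$ must match exactly the regularity that $s > 1$ provides.
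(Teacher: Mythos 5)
Your proof is correct and rests on exactly the same computation as the paper's: the Leibniz rule together with the vanishing of $\partial_{p_1}\big(e^{\b p^s}\big)=\b s p^{s-2}p_1e^{\b p^s}$ at ${\bf p}={\bf 0}$ for $s>1$, which shows both that $e^{\b h}\check g$ lies in the form domain and that $\partial_{p_1}(e^{\b p^s}g)({\bf 0})=\partial_{p_1}g({\bf 0})$, whence \eqref{lepccr2} holds with $\m=0$. The only cosmetic difference is that you phrase this through the kernel identity \eqref{cedcae1} with $G=D\,\partial_{p_1}\d\otimes\partial_{k_1}\d$, which is precisely the reformulation the paper itself gives in the discussion immediately following its proof.
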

\begin{proof}
Fix a generator $\check f=e^{\imath h t}\check\f$. It is straightforward to see that
$$
\int_{\br^d}\frac{|e^{\b p^s}f({\bf p})|^2}{e^{\b p^s}-1}\di^d{\bf p}<+\infty\,.
$$
In addition, if $s>1$ we get
$$
\frac{\partial(\g f)}{\partial p_1}({\bf p})=\b sp^{s-2}p_1e^{\b p^s}f({\bf p})+e^{\b p^s}\frac{\partial f}{\partial p_1}({\bf p})\,.
$$
Collecting together with $\g({\bf p})=e^{\b p^s}$, we conclude first that $\om\big(a^\dagger(\g f)a(\g f)\big)<+\infty$,
which means that $\g f$ is in the domain of the form describing the two-point function of $\om$. Second,
$\frac{\partial(\g f)}{\partial p_1}({\bf 0})=\frac{\partial f}{\partial p_1}({\bf 0})$,
leading to \eqref{lepccr2}. Namely,  the states in \eqref{delder} are equilibrium states at inverse temperature $\b$ and chemical potential $\m=0$.
\end{proof}
The two-point function of the usual state $\f$ exhibiting the condensation of free Bosons is given by
\begin{equation}
\label{csatzi}
\f(a^\dagger(\check f)a(\check g))=\int_{\br^d}\frac{f({\bf p})\overline{g({\bf p})}}{e^{\b p^s}-1}\di^d{\bf p}
+Df({\bf 0})\overline{g({\bf 0})}\,,
\end{equation}
and has the form \eqref{kdensi1} with
$$
G_\f({\bf p},{\bf k})=\d({\bf p})\d({\bf p}-{\bf k})\,.
$$
Despite the product of two delta distributions, $G_\f({\bf p},{\bf k})$ makes sense as a distribution, giving precisely the condensation portion after the application of Fubini rule in the formal integration as explained below.
Concerning the two-point function in \eqref{delder}, it assumes the form \eqref{kdensi1} with
$$
G_\om({\bf p},{\bf k})=\frac{\partial^2\big(\d({\bf p})\d({\bf p}-{\bf k})\big)}{\partial p_1\partial k_1}
=\frac{\partial^2G_\f({\bf p},{\bf k})}{\partial p_1\partial k_1}\,,
$$
which leads to
\begin{align*}
\int G_\om({\bf p},{\bf k})f({\bf p})\overline{g({\bf k})}&\di^d {\bf p}\di^d {\bf k}
=\int\frac{\partial^2G_\f({\bf p},{\bf k})}{\partial p_1\partial k_1}f({\bf p})\overline{g({\bf k})}\di^d {\bf p}\di^d {\bf k}\\
=&\int\di^d {\bf p}\frac{\partial f({\bf p})}{\partial p_1}\d({\bf p})
\int\di^d {\bf k}\frac{\partial\overline{g({\bf k})}}{\partial k_1}\d({\bf p}-{\bf k})\\
=&\int\di^d {\bf p}\frac{\partial f({\bf p})}{\partial p_1}\frac{\partial\overline{g({\bf p})}}{\partial p_1}\d({\bf p})
=\frac{\partial f}{\partial p_1}({\bf 0})\frac{\partial \bar g}{\partial p_1}({\bf 0})\,.
\end{align*}
The higher (even) derivatives of the delta-distribution can give rise to a condensate distribution only if the power $s$ in $h({\bf p})=p^s$ is sufficiently big. Conversely, it never appear in the case of photon/phonon one-particle Hamiltonian $h({\bf p})=p$, and only states like \eqref{delder} involving the second derivative of the delta-distribution and not the higher ones can appear in the free massive Bosons for which $h({\bf p})=p^2$.

The local density of particles $\r_\om({\bf x})$ of the states in \eqref{delder} is finite everywhere in the configuration space:
\begin{equation}
\label{parab}
\r_\om({\bf x}):=\om(a^\dagger(\d_{{\bf x}})a(\d_{{\bf x}}))=\int_{\br^d}\frac{\di^d{\bf p}}{e^{\b p^s}-1}
+D\bigg|\frac{\partial\widehat{\d_{{\bf x}}}}{\partial p_1}({\bf 0})\bigg|^2
=\int_{\br^d}\frac{\di^d{\bf p}}{e^{\b p^s}-1}
+Dx_1^2\,.
\end{equation}
Conversely, the mean density is infinite due to the contribution of the condensation term
$D\big|\frac{\partial\widehat{\d_{{\bf x}}}}{\partial p_1}({\bf 0})\big|^2=Dx_1^2$. As the standard procedure in constructing
states exhibiting BEC as infinite volume limits of finite volume theory is to fix the mean density of the model, the last computation easily explains because such terms connected to the second derivative cannot appear in the standard investigation based on Bose-Gibbs prescription for the grand canonical ensemble construction.

States which are rotationally invariant (i.e. isotropic) are easily given by
\begin{equation}
\label{delder1}
\om\big(a^\dagger(\check f)a(\check g)\big)=\int_{\br^d}\frac{f({\bf p})\overline{g({\bf p})}}{e^{\b p^s}-1}\di^d{\bf p}
+D\nabla_{\bf p} f({\bf 0}){\bf\cdot}\nabla_{\bf p}\overline{g({\bf 0})}\,,
\end{equation}
as well as states exhibiting condensation, which are connected to higher derivative of the Dirac distribution can be easily wrote down, provided that $s$ is sufficiently big. It is also possible to exhibit such states involving higher derivatives of the delta-distribution in local equilibrium, obtaining a richer situation.

For all the states described in \eqref{delder1} (and also for those in \eqref{delder}), the action of the spatial translations in the configuration space $T_{\bf x}$ is spontaneously broken. We easily compute
\begin{align*}
&\om\circ T_{\bf x}\big(a^\dagger(\check f)a(\check g)\big)=
\om\big(T_{\bf x}(a^\dagger(\check f)a(\check g))\big)\\
=\om\big(a^\dagger(\check f)a(\check g)\big)
+&D\big[x^2f({\bf 0})\overline{g({\bf 0})}+\imath{\bf x}{\bf\cdot}\big(\overline{g({\bf 0})}\nabla_{\bf p} f({\bf 0})-f({\bf 0})\nabla_{\bf p}\overline{g({\bf 0})}\big)\big]\,,
\end{align*}
where as usual the dot denotes the real inner product. Contrarily to the states in \eqref{1ab1} for the rotation symmetry, for the states in \eqref{delder1} also the local density of the particle changes for a term relative to the condensate which is simply computed as
$$
\r_{\om\circ T_{\bf y}}({\bf x})=\r_{\om}({\bf x})+Dy^2\,.
$$
Fix any mollifier $\d_\eps({\bf p})$, converging in the sense of distribution to the delta distribution $\d({\bf p})$ as $\eps\downarrow0$. In order to have an idea of the distribution of the condensate in momentum space of the usual situation \eqref{csatzi}
\begin{figure}[ht]
     \centering
     \psfig{file=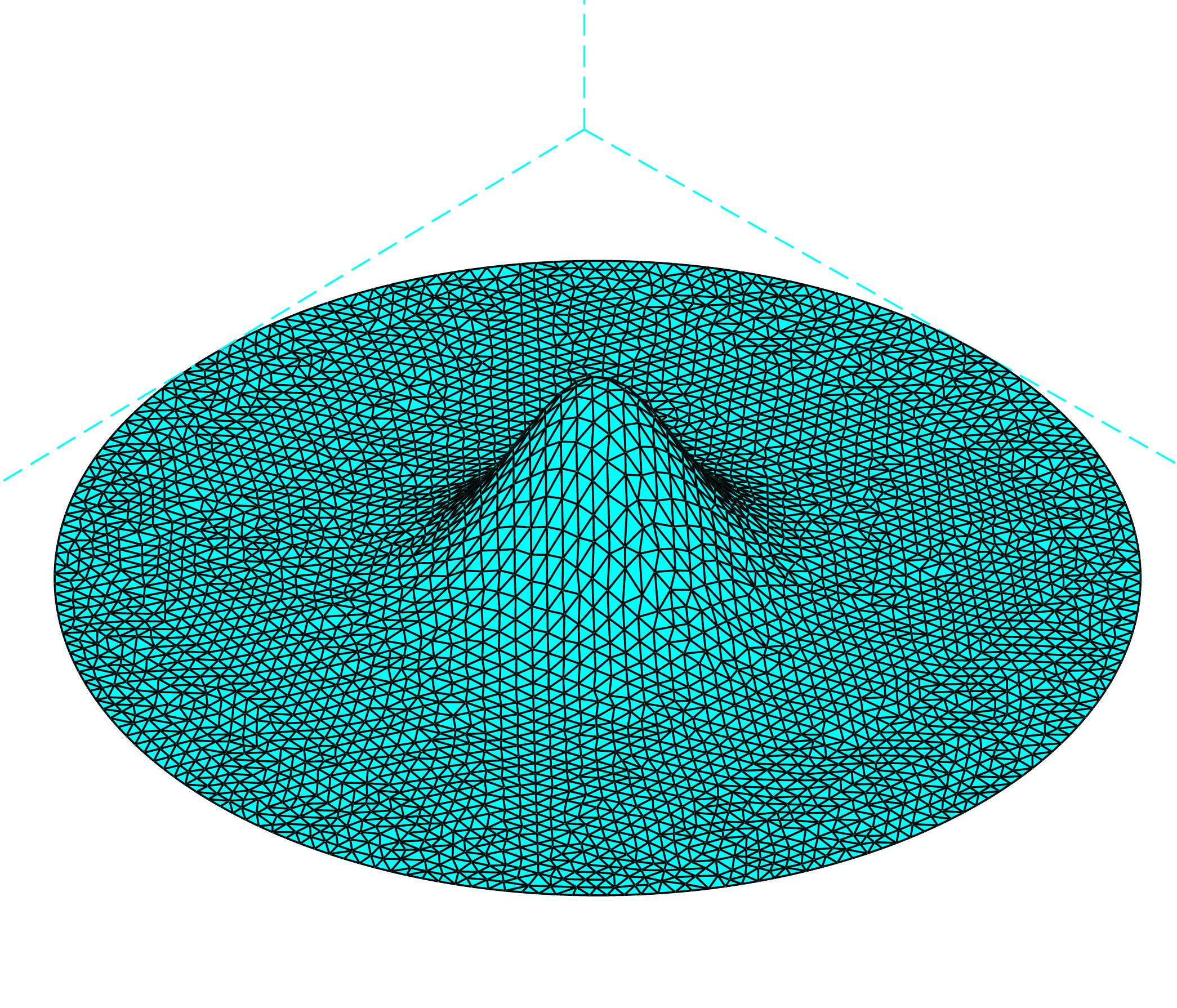,height=2.1in} \,\, \psfig{file=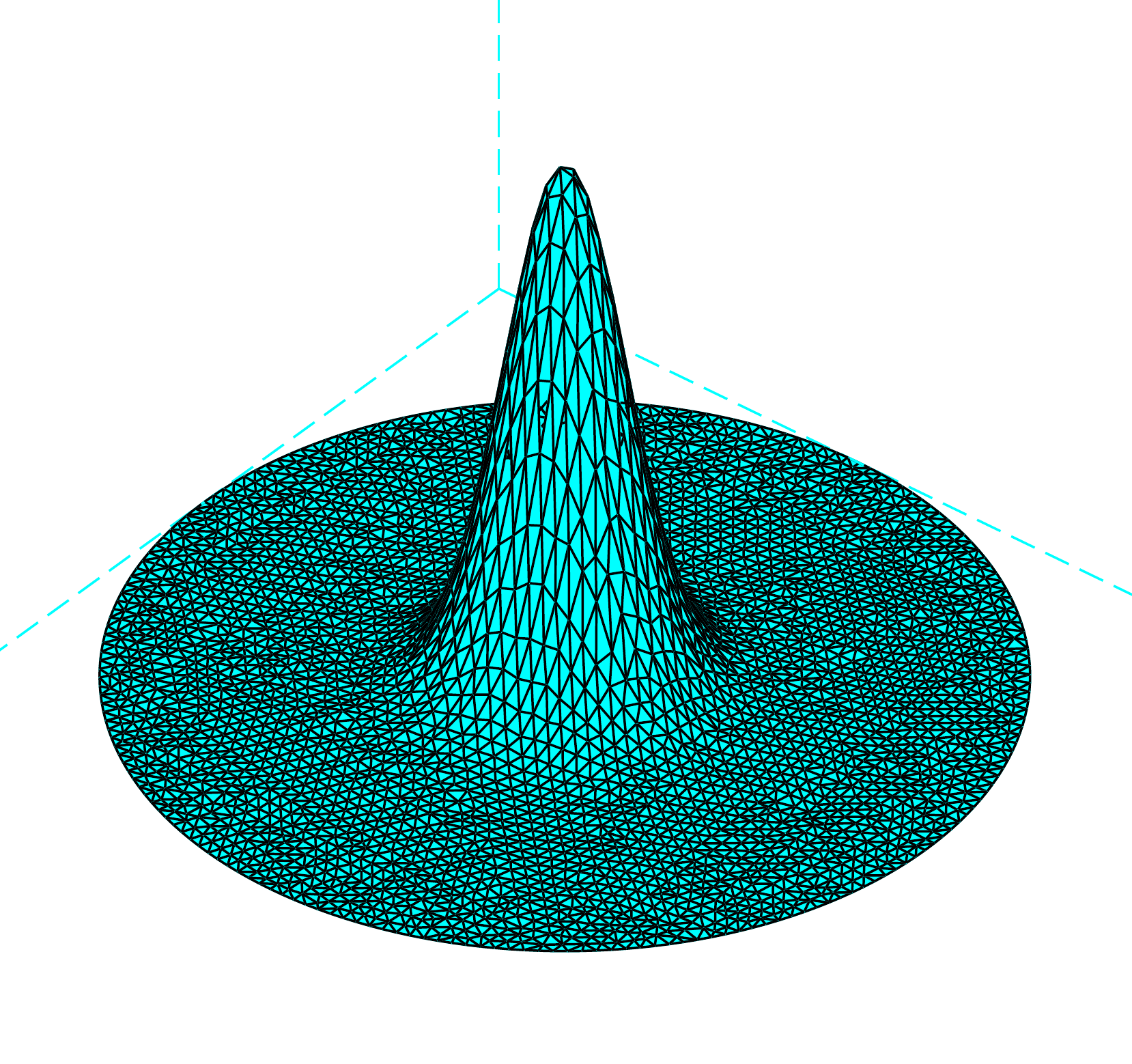,height=2.0in}
     \caption{The density $\r_\f^{(\text{cond})}(k_1,k_2)$ in momentum space relative to the states \eqref{csatzi} for two values $\eps_1>\eps_2$.}
     \label{f1}
     \end{figure}
and that of \eqref{delder1}, we end the present section by drawing the projection in the $k_1$-$k_2$ plane of
\begin{align*}
\r_\f^{(\text{cond})}(k_1,k_2)&=\d_\eps(k_1,k_2,0)^2\,,\quad\\
\r_\om^{(\text{cond})}(k_1,k_2)&=\big\|\nabla_{\bf p}\d_\eps({\bf p}-{\bf k})\lceil_{{\bf p}=0,k_3=0}\big\|^2\,,
\end{align*}
\begin{figure}[ht]
     \centering
     \psfig{file=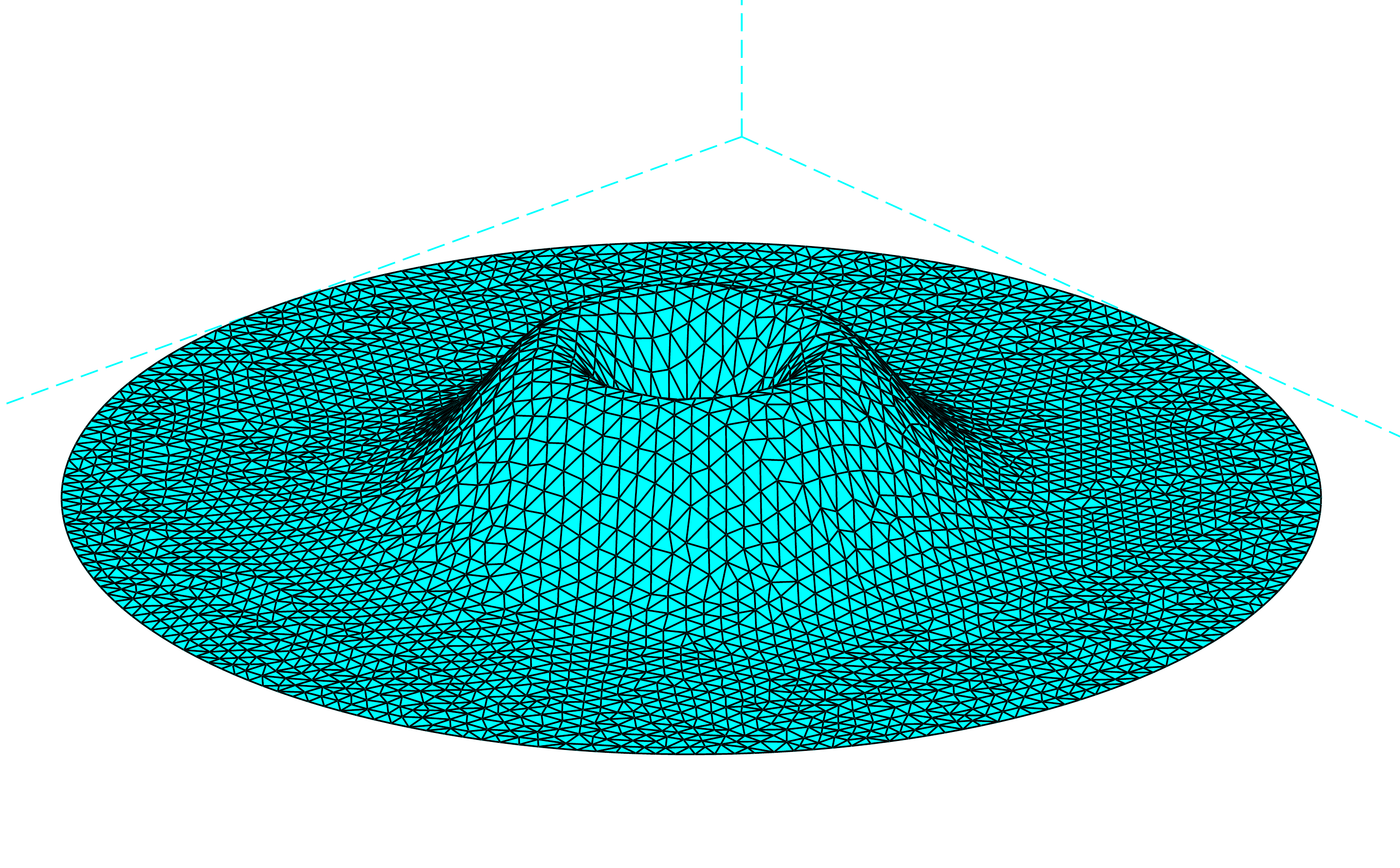,height=1.5in} \,\, \psfig{file=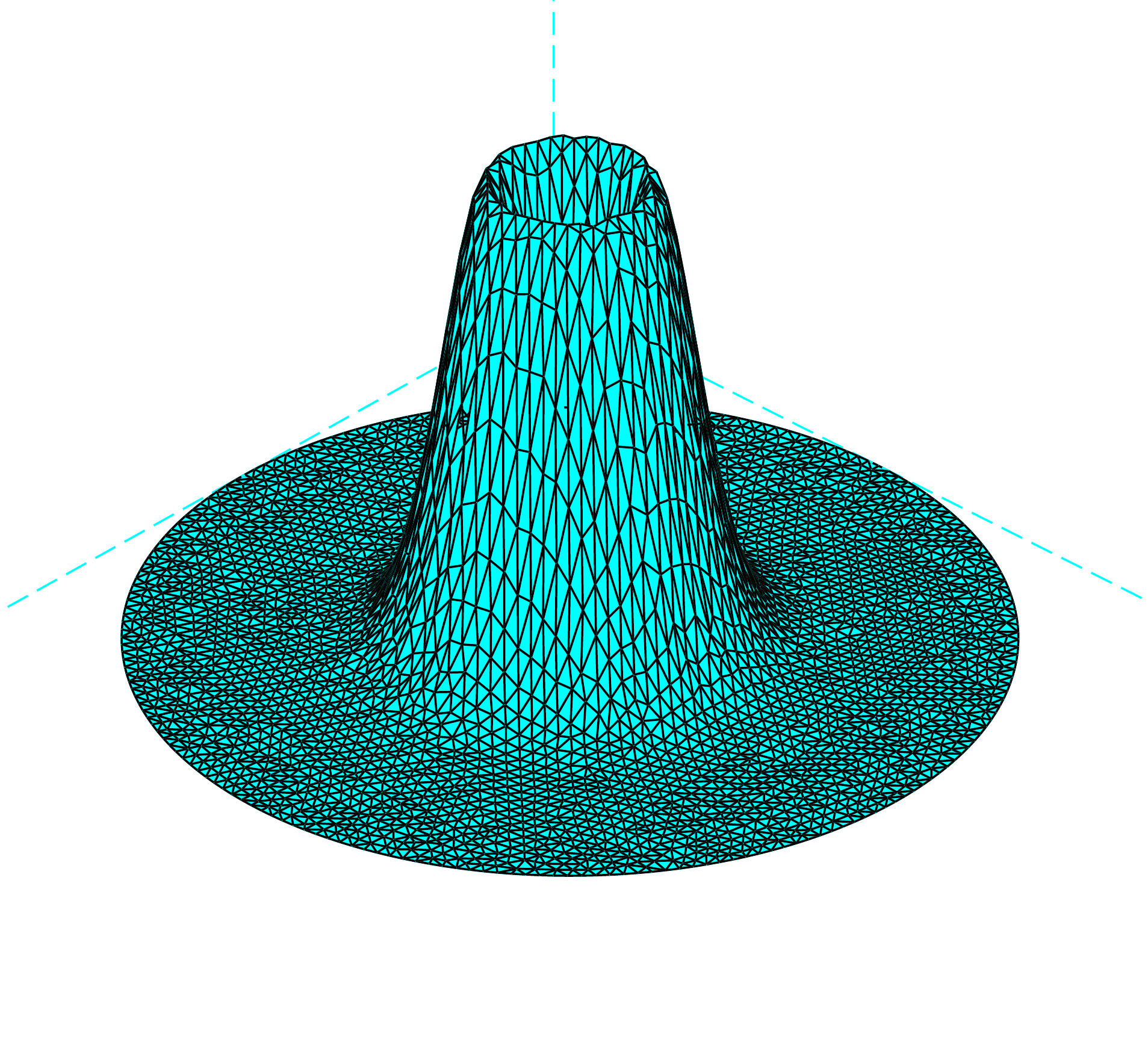,height=1.9in}
     \caption{The density $\r_\om^{(\text{cond})}(k_1,k_2)$ in momentum space relative to the states \eqref{delder1} for two values $\eps_1>\eps_2$.}
     \label{f2}
     \end{figure}
where $\om$ is the standard equilibrium state exhibiting condensation \eqref{csatzi} (cf. Fig. \ref{f1}), and one in \eqref{delder1} (cf. Fig. \ref{f2}), respectively.

\section{remarks and outlook}
\label{outlook}

In order to study the condensation regime, we have met the problem concerning the way to introduce the chemical potential. We briefly discuss a further reasonable choice
\begin{equation}
\label{koch}
n_\eps=\frac1{z^{-1}e^{\b(\eps)\eps}-1}=\frac1{e^{\b(\eps)(\eps-\m)}-1}
\end{equation}
corresponding to choice for the the activity
$$
z(\b,\m)=e^{\b\m}=e^{\b(\eps)\m}\,.
$$
The analogous equation to \eqref{cedcae1} assumes the form
\begin{equation}
\label{cedcaea}
\big(e^{\b(h({\bf p}))(h({\bf p})-\m)}-q\big)F_\om({\bf p},{\bf k})=\d({\bf p}-{\bf k})\,.
\end{equation}
It is simply to see that this different possibility introduces no change for the cases $q\in[-1,0]$, and for the limiting Bosonic case $q=1$. For the Boson-like cases $q\in(0,1)$, 
we compute for the sake of completeness, the critical value of the chemical potential $\m_q$, and hence the relative critical density $\r_c^{(q)}$. The key-point is the following
\begin{Lemma}
\label{claezmzao}
Let the positive function $\b:(0,+\infty)\to[0,+\infty)$ be continuous and vanishing of a set $E\subset(0,+\infty)$ such that the set ${\rm L}(E)$ of its cluster points is discrete.
If $\limsup_{x\downarrow0}<+\infty$, then the function
$G:\br_+\to\br_+$ given by $G(\m):=\sup_{(0,\m]}\big(\b(x)(\m-x)\big)$ is strictly increasing and continuous.
\end{Lemma}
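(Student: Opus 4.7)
\medskip\noindent
\textbf{Proof plan.}
The plan is to write $f_\mu(x):=\b(x)(\mu-x)$ so that $G(\mu)=\sup_{x\in(0,\mu]}f_\mu(x)$, after which the monotonicity of $G$ is immediate since $f_\mu(x)$ is affine and strictly increasing in $\mu$ for each fixed $x$. To upgrade this to strict monotonicity, I would first show that $G(\mu)>0$ for every $\mu>0$: were $G(\mu)=0$, then $\b\equiv 0$ on $(0,\mu)$, whence ${\rm L}(E)\supset(0,\mu)$, contradicting the hypothesis that ${\rm L}(E)$ is discrete. Given then $0<\mu_1<\mu_2$, I would pick a maximizing sequence $\{x_n\}\subset(0,\mu_1]$ with $f_{\mu_1}(x_n)\to G(\mu_1)>0$, extract the lower bound $\liminf_n\b(x_n)\ge G(\mu_1)/\mu_1>0$ from $f_{\mu_1}(x_n)\le \mu_1\b(x_n)$, and invoke the decomposition $f_{\mu_2}(x_n)=f_{\mu_1}(x_n)+(\mu_2-\mu_1)\b(x_n)$ to arrive at
\[
G(\mu_2)\ge\liminf_n f_{\mu_2}(x_n)\ge G(\mu_1)+\frac{(\mu_2-\mu_1)G(\mu_1)}{\mu_1}>G(\mu_1).
\]

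For continuity I would treat the two one-sided limits separately. For left-continuity at $\mu>0$ I would fix a small $\eps\in(0,G(\mu))$ and a near-maximizer $x_0\in(0,\mu]$ with $f_\mu(x_0)>G(\mu)-\eps$; necessarily $x_0<\mu$ (otherwise $f_\mu(x_0)=0$), so for $\mu'\in(x_0,\mu)$ one has $G(\mu')\ge f_{\mu'}(x_0)\to f_\mu(x_0)>G(\mu)-\eps$ as $\mu'\uparrow\mu$, which combined with monotonicity forces $\lim_{\mu'\uparrow\mu}G(\mu')=G(\mu)$. For right-continuity at $\mu$ I would take $\mu_n\downarrow\mu$ and $x_n\in(0,\mu_n]$ with $f_{\mu_n}(x_n)>G(\mu_n)-1/n$, and extract a subsequence with $x_n\to x^\ast\in[0,\mu]$. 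If $x^\ast>0$ the continuity of $\b$ at $x^\ast$ gives $f_{\mu_n}(x_n)\to f_\mu(x^\ast)\le G(\mu)$. If $x^\ast=0$, setting $M:=\limsup_{x\downarrow 0}\b(x)<+\infty$ and fixing any $\eta>0$, one has $\b(x_n)<M+\eta$ eventually, so $f_{\mu_n}(x_n)\le(M+\eta)\mu_n\to(M+\eta)\mu$; combined with the lower bound $G(\mu)\ge M\mu$ (obtained by testing the supremum against a sequence $y_k\downarrow 0$ along which $\b(y_k)\to M$), this yields $\limsup_n G(\mu_n)\le G(\mu)+\eta\mu$, and letting $\eta\downarrow 0$ concludes.

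The main obstacle I expect is precisely the case $x^\ast=0$ in the right-continuity, where the approximate maximizers could escape to the origin; this is the only place the hypothesis $\limsup_{x\downarrow 0}\b(x)<+\infty$ is truly needed, as it prevents a possible jump discontinuity at $\mu$. The discreteness of ${\rm L}(E)$ enters only once, to guarantee $G(\mu)>0$, which in turn drives the strict-monotonicity estimate through the lower bound $G(\mu_1)/\mu_1$ on $\liminf_n\b(x_n)$.
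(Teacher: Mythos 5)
Your proof is correct and follows essentially the same strategy as the paper's: strict monotonicity via the decomposition $\b(x)(\mu_2-x)=\b(x)(\mu_1-x)+(\mu_2-\mu_1)\b(x)$ evaluated along (near-)maximizers, and one-sided continuity controlled by the boundedness of $\b$ near the origin, which is exactly where the hypothesis $\limsup_{x\downarrow0}\b(x)<+\infty$ enters in both arguments. If anything, your version is slightly more complete, since you justify explicitly that $G(\mu)>0$ from the discreteness of ${\rm L}(E)$ (and hence that approximate maximizers carry $\b(x_n)$ bounded below), a point the paper leaves implicit in its appeal to a maximizer $\bar x$ with $\b(\bar x)>0$ via the Weierstrass theorem.
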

\begin{proof}
Define $\b(0):=\limsup_{x\downarrow0}$ and fix $\m>\l$. There exists a $\bar x\in[0,\l]$ for which $\b(\bar x)>0$ such that by applying Weierstrass Theorem, we get
\begin{align*}
&G(\m)=\sup_{(0,\m]}\big(\b(x)(\m-x)\big)\geq\sup_{(0,\l]}\big(\b(x)(\m-x)\big)\\
\geq&\sup_{(0,\l]}\big(\b(x)(\l-x)\big)+\b(\bar x)(\m-\l)
>\sup_{(0,\l]}\big(\b(x)(\l-x)\big)=G(\l)\,.
\end{align*}
Thus $G(\m)$ is strictly increasing. 
Concerning the continuity, first notice that 
$$
\lim_{\m\downarrow\l}\sup_{(0,\l]}\big(\b(x)(\m-x)\big)=\sup_{(0,\l]}\big(\b(x)(\l-x)\big)\,,
$$
because for $x\leq\l$, 
$$
b(x)(\m-x)=\b(x)(\l-x)+\b(x)(\m-\l)\,.
$$ 
In addition, 
$$
\lim_{\m\downarrow\l}\sup_{[\l,\m]}\big(\b(x)(\m-x)\big)=0\,.
$$
Collecting together, we get
$$
\lim_{\m\downarrow\l}G(\m)=\lim_{\m\downarrow\l}
\big(\sup_{(0,\l]}\big(\b(x)(\m-x)\big)\vee\sup_{[\l,\m]}\big(\b(x)(\m-x)\big)\big)
=G(\l)
$$
For the reverse limit, first notice that, with $\m$ fixed, and $\l\leq\m$, 
$$
\l\mapsto\sup_{(0,\l]}\big(\b(x)(\m-x)\big)
$$
is continuous. In addition, 
$$
\lim_{\l\uparrow\m}\sup_{(0,\l]}\big(\b(x)(\m-\l)\big)=0\,.
$$
Then we get with $\m>0$ fixed, $\l<\m$, and $x\in(0,\l]$,
$$
\b(x)(\m-x)-\sup_{(0,\l]}\big(\b(x)(\m-\l)\big)\leq\b(x)(\l-x)\leq\b(x)(\m-x)\,,
$$
which leads to $\lim_{\l\uparrow\m}G(\l)=G(\m)$.
\end{proof}
The choice \eqref{cedcaea} for the occupation numbers introduces technical troubles only for the unphysical cases $q\in(0,1)$, for which we are going to determine the critical chemical potentials responsible of the condensation effects.
\begin{Prop}
\label{1:cp}
Let $\b$ satisfy the hypotheses of Lemma \ref{claezmzao}.
\begin{itemize}
\item[(i)] Suppose that $\limsup_{x\downarrow0}\b(x)=+\infty$, and define $\m_q=0$.
\item[(ii)] Suppose that $\limsup_{x\downarrow0}\b(x)<+\infty$. Then for $\m>0$, the function
\begin{equation}
\label{qkrict0}
F_q(\m):=\frac{-\ln q}{\sup_{(0,\m]}\big(\b(x)(\m-x)\big)}\,,\quad q\in(0,1)\,,
\end{equation}
is continuous and strictly decreasing. Define $\m_q=0$ if
the equation
\begin{equation}
\label{qkrict}
F_q(\m)=1\,,
\end{equation}
has no solution which happens if and only if $\limsup_{x\downarrow0}\b(x)\leq1$, or $\m_q$ as the unique solution of \eqref{qkrict}. 
\end{itemize}
In both situations, the following hold true.
If $\m<\m_q$ then
\begin{equation*}
\big(e^{\b(x)(x-\m)}-q\big)\geq\d>0\,,\quad x\in\br_+\,,
\end{equation*}
and if $\m>\m_q$ there exists an open interval $I\subset\br_+$ such that
\begin{equation*}
\big(e^{\b(x)(x-\m)}-q\big)\leq-\d<0\,,\quad x\in I\,.
\end{equation*}
\end{Prop}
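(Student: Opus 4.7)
The plan is to reduce the whole analysis to the function $G(\m):=\sup_{(0,\m]}\big(\b(x)(\m-x)\big)$, so that $F_q(\m)=-\ln q/G(\m)$, and to split the argument according to whether $\limsup_{x\downarrow 0}\b(x)$ is infinite (case (i)) or finite (case (ii)). The two pointwise bounds at the end will follow from the monotonicity and continuity of $G$ together with continuity of $\b$ at interior positive points.

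In case (i), for any fixed $\m>0$ I would pick a sequence $x_n\downarrow 0^+$ with $\b(x_n)\to+\infty$, so that $\b(x_n)(\m-x_n)\to+\infty$; hence $G(\m)=+\infty$ on $(0,+\infty)$, whence $F_q\equiv 0$ there and $F_q(\m)=1$ admits no solution, consistent with the prescription $\m_q=0$. In case (ii), the hypotheses of Lemma \ref{claezmzao} are met, so $G$ is continuous and strictly increasing on $\br_+$, and therefore $F_q$ is continuous and strictly decreasing on $(0,+\infty)$. To locate the solution of $F_q(\m)=1$ one performs the asymptotic analysis of $G$: continuity at the origin gives $G(\m)\to 0$ as $\m\downarrow 0$, while fixing any $x_0\in(0,+\infty)\setminus E$ (which exists because ${\rm L}(E)$ is discrete) yields $G(\m)\geq\b(x_0)(\m-x_0)\to+\infty$ as $\m\to+\infty$. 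The delicate point is to compare the rate of growth of $G$ at the origin (governed by $\limsup_{x\downarrow 0}\b(x)$) with the threshold $-\ln q$ and establish the stated dichotomy for the existence of a solution of $F_q=1$ in terms of the condition $\limsup_{x\downarrow 0}\b(x)\leq 1$; this is the step I would scrutinise most carefully.

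For the two pointwise bounds, in the regime $\m<\m_q$ (vacuous in case (i)) strict monotonicity of $G$ gives $G(\m)<G(\m_q)=-\ln q$; with $\eta:=-\ln q-G(\m)>0$, for every $x\in(0,\m]$ one has $\b(x)(x-\m)\geq-G(\m)=\ln q+\eta$, so $e^{\b(x)(x-\m)}-q\geq q(e^{\eta}-1)$, while for $x\geq\m$ one gets trivially $e^{\b(x)(x-\m)}-q\geq 1-q$; choosing $\d:=\min\{q(e^{\eta}-1),1-q\}>0$ settles the bound. For the regime $\m>\m_q$, the inequality $G(\m)>-\ln q$ allows selection of $x_0\in(0,\m]$ with $\b(x_0)(\m-x_0)>-\ln q+2\eps$ for some $\eps>0$: if the supremum in $G(\m)$ is attained at an interior point, $x_0$ is that point; if it is only approached as $x\to 0^+$ (the typical situation in case (i)), one takes $x_0$ small and strictly positive via the $\limsup$ characterisation of $\b$. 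Continuity of $\b$ at the positive point $x_0$ then delivers an open interval $I\subset\br_+$ on which $\b(x)(\m-x)>-\ln q+\eps$, so that $e^{\b(x)(x-\m)}-q\leq q(e^{-\eps}-1)<0$, which is the desired second bound.

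The main obstacle, as just noted, is the claim identifying the ``no solution'' sub-case of (ii) with $\limsup_{x\downarrow 0}\b(x)\leq 1$: a straightforward range analysis of $G$ would seem to give a unique positive solution of $F_q=1$ for every $\b$ that is not identically zero on $(0,+\infty)$. Reconciling this with the statement likely requires a refined inspection of the linearisation of $G$ at the origin, or an implicit restriction on the admissible range of $\m$ coming from the physics. Once this is clarified, the remaining steps -- continuity and monotonicity from Lemma \ref{claezmzao}, together with the two elementary uniform estimates above -- are routine.
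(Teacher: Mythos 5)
Your argument is essentially the paper's own, and where it differs it is slightly more careful. For the two uniform estimates the paper proceeds exactly as you do: for $\m>\m_q$ it produces a point $x_0\in(0,\m)$ where $\b(x_0)(\m-x_0)>-\ln q+\eps$ and invokes continuity to get the interval $I$; for $\m<\m_q$ it derives $\b(x)(\m-x)\leq -\ln q-\eps$ on $(0,\m]$ and handles $x\geq\m$ by the trivial bound $e^{\b(x)(x-\m)}\geq 1$. In fact the paper's chain of inequalities for the case $\m<\m_q$ starts from $\b(x)(\m-x)\leq\b(x)(\m_q-x)-\eps$, which implicitly requires $\b(x)(\m_q-\m)\geq\eps$ uniformly in $x$ and hence a positive lower bound on $\b$ that is not available (recall $\b$ may vanish on $E$); your route through the strict inequality $G(\m)<G(\m_q)=-\ln q$ supplied by Lemma \ref{claezmzao} avoids this and is the correct way to extract a uniform $\eta$. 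One small quibble: in case (i) the regime $\m<\m_q=0$ is not vacuous (negative chemical potentials are not excluded), but it is trivial there since $\b(x)(x-\m)\geq 0$ gives $\d=1-q$.

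On the point you flag as the main obstacle: you have not missed an idea, because the paper's proof is silent on it too. The published proof consists only of the citation of Lemma \ref{claezmzao} for continuity and strict monotonicity of $F_q$ and the two pointwise estimates; the claim that \eqref{qkrict} has no solution if and only if $\limsup_{x\downarrow0}\b(x)\leq1$ is never argued. Your range analysis is sound: under the finiteness hypothesis, $G(\m)\leq\m\,\sup_{(0,\m]}\b\to0$ as $\m\downarrow0$ and $G(\m)\geq\b(x_0)(\m-x_0)\to+\infty$ for any fixed $x_0\notin E$, so $F_q$ decreases continuously from $+\infty$ to $0$ and $F_q(\m)=1$ always has a unique positive solution. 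The stated dichotomy in terms of $\limsup_{x\downarrow0}\b(x)\leq1$ therefore appears to be an error (or to presuppose an unstated restriction on the admissible range of $\m$), not a gap in your proof. It would be worth recording this explicitly rather than leaving it as an unresolved ``delicate point''.
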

\begin{proof}
If $\limsup_{x\downarrow0}\b(x)<+\infty$, Lemma \ref{claezmzao} assures that \eqref{qkrict0} is a continuous strictly decreasing function.
In all the situations, if $x\geq\m$, then $e^{\b(x)(x-\m)}\geq1>q$. If $\m<\m_q$ and $x<\m$ then for some $\eps>0$,
$$
\b(x)(\m-x)\leq\b(x)(\m_q-x)-\eps
\leq\sup_{(0,\m_q]}\big(\b(x)(\m-x)\big)-\eps=-\ln q-\eps\,,
$$
which leads to
$$
e^{\b(x)(x-\m)}\geq qe^{\eps}>q\,.
$$
Conversely, if $\m>\m_q$, we get in all the situations that there exists an open interval $I\in(0,\m)$ such that for some $\eps>0$,
$$
\b(x)(\m-x)\geq-\ln q +\eps\,,\quad x\in I\,.
$$
But this immediately implies for $x\in I$,
$$
e^{\b(x)(x-\m)}\leq qe^{-\eps}<q\,.
$$
\end{proof}
Notice that, being $\ln q=0$, \eqref{qkrict} is meaningful even for $q=1$ providing $\m_1=0$ as expected. In addition, if $\b$ is constant, then $\b\m_q=-\ln q$, which reduces to the usual formula of equilibrium thermodynamics. Proposition \ref{1:cp} explains why $\m_q$ solving \eqref{qkrict} is the critical value of the chemical potential for the $q$-deformed situation, 
$q\in(0,1)$, for the choice \eqref{koch} concerning the introduction of the chemical potential.
The corresponding critical density is easily given by
$$
\r_c^{(q)}=\int_{\br^d}\frac{\di^d{\bf p}}{e^{\b(h({\bf p}))(h({\bf p})-\m_q)}-q}\,.
$$
Unfortunately, there is no direct relation between $\{\r_c^{(q)}\mid q\in(0,1)\}$ and $\r_c^{(1)}$. This introduces some additional technical troubles in order to manage \eqref{cedcaea} for the unphysical cases $0<q<1$ in the condensation regime $\m=\m_q$, without affecting the substance of the results. We do not pursue this direction, which perhaps can be managed in details for particular choices of the local temperature function $\b$.

We have discussed new ideas of Local Equilibrium Principle which is a suitable candidate to select stationary states in non equilibrium thermodynamics. Roughly speaking, the Local Equilibrium Principle asserts that the temperature is a function of the energy levels. Even if the Local Equilibrium Principle cannot be directly extended to general dynamical systems describing models with infinitely many degrees of freedom, as well as general condensed matter states appearing in nature, 
we have provided a wide class of nontrivial new and unexpected examples involving Boson particles and describing the Bose-Einstein Condensation. It is expected that the standard construction based on the Bose-Gibbs grand canonical ensemble involving the thermodynamical limit of the finite volume theories cannot directly cover all the examples we have found. This suggested a new approach based on the theory of the Distributions. This new approach allowed us to manage in an unified way the $q$-Commutation Relations, and the condensation of the so called $q$-particles can appear in a natural way only for the Bose-like situation $q\in(0,1]$. 
By using the new approach to BEC described in Section \ref{new}, we found new examples even in the standard case of equilibrium thermodynamics when the temperature is kept fixed by an external thermal bath. All these new states might find applications in high density/temperature non equilibrium physics, quantum optics, theory of superconductivity. Among such possible applications, we briefly outline the following ones as promising perspectives. 

The first one concerns the states in the usual case of equilibrium thermodynamics described in Section \ref{0nw1}.
One consider the axially symmetric states whose two-point function is given by
\begin{equation*}
\om(a^\dagger(\check f)a(\check g))=\int_{\br^d}\frac{f({\bf p})\overline{g({\bf p})}}{e^{\b p^2/2m^*}-1}\di^3{\bf p}
+D\bigg(\frac{\partial f}{\partial p_x}({\bf 0})\frac{\partial \bar g}{\partial p_x}({\bf 0})
+\frac{\partial f}{\partial p_y}({\bf 0})\frac{\partial \bar g}{\partial p_y}({\bf 0})\bigg)
\end{equation*}
concerning the distribution of some quasi-particle of effective mass $m^*$ at inverse temperature $\b>0$. The spatial local density is easily computed as in \eqref{parab}
\begin{equation}
\label{meisc}
\r_\om(x,y,z)=\int_{\br^d}\frac{\di^3{\bf p}}{e^{\b p^2/2m^*}-1}
+D\big(x^2+y^2)\,.
\end{equation}
The last addendum 
$$
\r^\text{cond}_\om(x,y,z)=D\big(x^2+y^2)
$$
describes the spatial portion of the condensate, and has the form of a $z$-axial symmetric paraboloid. The helium superfluidity was explained (cf. \cite{L}) by the excitations of the quasi-particles corresponding to the rotonic part of the spectrum of $He_4$  as briefly outlined in the introduction. Even if the superfluidity seems not directly connected with the BEC, it should be noted the surprising analogy between the spatial distribution of the condensate \eqref{meisc} and the profile (i.e. the meniscus) of the rotating $He_4$ superfluid, apart from the finer regular structure made of vortices, appearing inside the rotating superfluid, see e.g. \cite{W}. 

The second one concerns the cosmological application of the Local Equilibrium Principle for highly concentrated systems. For example, due to the net energy flow, the black hole evaporation (cf. \cite{B1, B2, Ha, Vi}) is a non equilibrium process. Then we can find states which are far from the equilibrium, for which the temperature is a function of the total energy of the black hole which of course changes during the evaporation process. Another possible application might be to the thermodynamics of neutron, and also of exotic Boson and quark stars. For such systems subjected to extreme conditions, first it might be expected that NESS are the natural candidates to describe their thermodynamics. Second, due to high pressure, the Fermions might form BCS pairs like in $He_3$ or in superconductors, and the BEC might take place.
On the other hand, the Local Equilibrium Principle allows states admitting portions of the condensate even in excited levels. In fact, states similar to those in \eqref{rotcinv} with 
$$
h({\bf p})=c\sqrt{p^2+m^2c^2}-mc^2
$$
the one-particle Hamiltonian of a free Boson gas in a neutron or exotic star (with $m$ being the mass of the involved Boson and $c$ the speed of light, where in a rough approach we are neglecting the gravitational interaction between particles), describe a portion of the condensate in excited levels which cannot appear in the usual equilibrium thermodynamics. Such possible condensation effects in excited levels might partially explain the open problem of the dark matter of the universe. 

We conclude by pointing out that the last two possible applications of the Local Equilibrium Principle are only ideas which deserve of further insights, and are still very far to be understood at this stage.

\section*{Acknowledgement} 

The second-named author was partially supported by Italian INDAM--GNAMPA. He kindly acknowledges M. Cirillo, and L. Tomassini, S. Viaggiu for fruitful discussions concerning the possible applications of the presented results to superfluidity, and to the cosmology, respectively.

\end{document}